\tikzstyle{printersafe}=[snake=snake,segment amplitude=0 pt]
\begin{document}

\newcommand{\attacker}{\textit{A}\xspace}
\newcommand{\victim}{\textit{V}\xspace}
\newcommand{\provider}{\textit{T}\xspace}

\newcommand{\cmark}{\ding{51}}%
\newcommand{\xmark}{\ding{55}}%

\lstdefinelanguage{JavaScript}{
  keywords={typeof, new, true, false, catch, function, return, null, catch, switch, var, if, in, while, do, else, case, break},
  keywordstyle=\color{blue}\bfseries,
  ndkeywords={class, export, boolean, throw, implements, import, this},
  ndkeywordstyle=\color{darkgray}\bfseries,
  identifierstyle=\color{black},
  sensitive=false,
  comment=[l]{//},
  morecomment=[s]{/*}{*/},
  commentstyle=\color{purple}\ttfamily,
  stringstyle=\color{red}\ttfamily,
  morestring=[b]',
  morestring=[b]"
}

\lstset{
   language=JavaScript,
   extendedchars=true,
   basicstyle=\footnotesize\ttfamily,
   showstringspaces=false,
   showspaces=false,
   numbers=left,
   numberstyle=\footnotesize,
   numbersep=9pt,
   tabsize=2,
   breaklines=true,
   showtabs=false,
   captionpos=b,
   upquote=true
}

\definecolor{codegray}{gray}{0.95}
\newcommand{\code}[1]{\colorbox{codegray}{\texttt{#1}}}

\theoremstyle{definition}
\newtheorem{definition}{Definition}

\newcommand{\NN}{\mathbb{N}}
\newcommand{\sizeof}[1]{\left| #1 \right|}
\newcommand{\logz}[1]{\log_2{#1}}
\newcommand{\cachesize}{\mathit{size}}
\newcommand{\assoc}{\mathit{a}}
\newcommand{\genassoc}{p}
\newcommand{\archbits}{n}
\newcommand{\pagebits}{p}
\newcommand{\hugepagebits}{h}
\newcommand{\attackerbits}{\gamma}
\newcommand{\linebits}{\ell}
\newcommand{\tagbits}{t}
\newcommand{\indexbits}{c}
\newcommand{\slicebits}{s}
\newcommand{\setclass}[1]{\left[#1\right]}

\newcommand{\projset}[1]{\mathit{set}(#1)}
\newcommand{\projslice}[1]{\mathit{slice}(#1)}
\newcommand{\vptrans}{\mathit{pt}}


\newcommand{\Exp}[1]{E(#1)}
\newcommand{\Var}[1]{\mathit{Var}(#1)}


\newcommand{\oo}{\mathcal{O}}
\newcommand{\testset}{T}
\newcommand{\posset}{P}
\newcommand{\univset}{S}
\newcommand{\lowth}{l}
\newcommand{\upth}{u}


\newtheorem{theorem}{Theorem}
\newtheorem{lemma}[theorem]{Lemma}
\newtheorem{proposition}{Proposition}
\newtheorem{corollary}[theorem]{Corollary}
\theoremstyle{definition}
\newtheorem{example}{Example}

\newcommand{\skylake}{sky}
\newcommand{\haswell}{has}

\title{Theory and Practice of Finding Eviction Sets}


\author[1,3]{Pepe Vila}
\author[2]{Boris K{\"o}pf}
\author[1]{Jos\'e F. Morales}
\affil[1]{IMDEA Software Institute}
\affil[2]{Microsoft Research}
\affil[3]{Technical University of Madrid (UPM)}
\renewcommand\Authands{ and }

\maketitle

\begin{abstract}
  Many micro-architectural attacks rely on the capability of an
  attacker to efficiently find small {\em eviction sets}: groups of
  virtual addresses that map to the same cache set. This capability
  has become a decisive primitive for cache side-channel, rowhammer,
  and speculative execution attacks. Despite their importance,
  algorithms for finding small eviction sets have not been
  systematically studied in the literature.

  In this paper, we perform such a systematic study. We begin by
  formalizing the problem and analyzing the probability that a set of
  random virtual addresses is an eviction set. We then present novel
  algorithms, based on ideas from threshold group testing, that reduce
  random eviction sets to their minimal core in linear time, improving
  over the quadratic state-of-the-art.

  We complement the theoretical analysis of our algorithms with a
  rigorous empirical evaluation in which we identify and isolate
  factors that affect their reliability in practice, such as adaptive cache
  replacement strategies and TLB thrashing.  Our results indicate that
  our algorithms enable finding small eviction sets much faster than
  before, and under conditions where this was previously deemed
  impractical.
\end{abstract}

\section{Introduction}\label{sec:introduction}

Attacks against the micro-architecture of modern CPUs have rapidly
evolved from an academic stunt to a powerful tool in the hand of
real-world adversaries. Prominent examples of attacks include
side-channel attacks against shared CPU
caches~\cite{YaromFlushReload2014}, fault injection attacks against
DRAM~\cite{SeabornRH2015}, and covert channel attacks that leak
information from speculative executions~\cite{spectre18}.

A key requirement for many of the documented attacks is that the
adversary be able to bring specific cache sets into a controlled state.
For example, \emph{flush+reload}~\cite{YaromFlushReload2014} attacks
use special instructions to invalidate targeted cache content (like
\verb!clflush!  on x86), for which they require privileged execution
and shared memory space.  Another class of attacks, called
\emph{prime+probe}, evicts cache content by replacing it with new
content and can be performed without privileges from user space or from
a sandbox.

The primitive used for replacing cache content is called an {\em
  eviction set}. Technically, an eviction set is a collection of
(virtual) addresses that contains at least as many elements that map
to a specific cache set as the cache has {\em ways}. The intuition is that,
when accessed, an eviction set clears all previous content from the
cache set. Eviction sets enable an adversary to (1) bring specific
cache sets into a controlled state; and to (2) probe whether this
state has been modified by the victim, by measuring latency of
accesses to the eviction set.

Accessing a large enough set of virtual addresses is sufficient for
evicting any content from the cache. However, such large eviction sets
increase the time required for evicting and probing, and they
introduce noise due to the unnecessary memory accesses.
For targeted and stealthy eviction of cache content one hence seeks to
identify eviction sets of {\em minimal size}, which is fundamental,
for example, for
\begin{itemize}
\item fine-grained monitoring of memory usage by a concurrent process
  in timing attacks against last-level caches~\cite{LiuLLC2015,
    Irazoqui2015};
\item enforcing that memory accesses hit DRAM instead of the cache with
high enough frequency to flip bits in rowhammer attacks~\cite{GrussRow2016};and
\item increasing the number of instructions that are speculatively
  executed by ensuring that branch guards are not
  cached~\cite{spectre18}.
\end{itemize}

Computing minimal eviction sets is recognized as a challenging
problem, equivalent to learning the mapping from virtual addresses to
cache sets~\cite{LiuLLC2015}.  The difficulty of the problem is
governed by the amount of control the adversary has over the bits of
physical addresses.  For example, on bare metal, the adversary
completely controls the mapping to cache sets; on huge pages, it
controls the mapping to cache sets within each cache slice, but not
the mapping to slices; on regular $4$KB pages, it only partially
controls the mapping to sets within each slice; and on sandboxed or
hardened environments it may not have any control over the mapping at
all~\cite{SchwarzJSZero2018,Irazoqui2015}.

Several approaches in the literature discuss algorithms for finding
minimal eviction sets, see Section~\ref{sec:related} for an overview.
These algorithms rely on a two-step approach in which one first
collects a large enough set of addresses that is an eviction set, and
then successively reduces this set to its minimal core.
Unfortunately, these algorithms are usually only considered as a means
to another end, such as devising a novel attack. As a
consequence, they still lack an in-depth analysis in terms of
complexity, real-time performance, correctness, and scope, which
hinders progress in research on attacks and on principled
countermeasures at the same time.

\paragraph*{Our approach}
In this paper we perform the first systematic study of finding minimal
eviction sets as an algorithmic problem. In our study we proceed as
follows:
\begin{asparaitem}
\item We give the first {\em formalization and analysis} of the
  pro\-blem of finding eviction sets. We study different variants of
  the problem, corresponding to different goals, for example, ``evicting a
  specific cache set'', and ``evicting an arbitrary cache set''.
  For these goals,
  we express the probability that a set of virtual addresses is
  an eviction set as a function of its size. The function exhibits that a
  small set of virtual addresses is unlikely to be an eviction set, but that
  the likelihood grows fast with the set size. This analysis justifies
  the two-step approach taken in the literature for computing minimal
  eviction sets, and it exhibits favorable set sizes to start the
  reduction.

\item We design {\em novel algorithms} for finding minimal eviction
  sets.  The basis of our algorithms are tests from the
  literature~\cite{LiuLLC2015} that use the cache side-channel as an
  oracle to determine whether a given set of virtual addresses is an
  eviction set. The key observation underlying our algorithms is that
  these tests can be seen as so-called {\em threshold group
    tests}~\cite{Damaschke2006}. This observation allows us to
  leverage ideas from the group testing literature for computing
  minimal eviction sets. We show that the resulting algorithm reduces
  an eviction set of size $n$ to its minimal core using only $\oo(n)$
  memory accesses, which improves over the current $\oo(n^2)$
  state-of-the-art~\cite{OrenSpy2015}.

\item We perform a rigorous {\em reliability analysis} of our
  algorithms on Intel's Haswell and Skylake microarchitectures. In our
  analysis, we identify ways to isolate the influence of TLBs and
  cache replacement policies. This allows us to exhibit conditions
  under which our algorithms are almost perfectly reliable, as well as
  conditions under which their reliability degrades.

\item We carry out a {\em performance analysis} of our algorithms on
  Intel Skylake. Our analysis shows that the execution time of our
  algorithms indeed grows only linearly in practice, which leads to
  significant speed-up compared to the existing quadratic
  algorithms. While previous approaches rely on assumptions about the
  number of controlled physical bits (provided by huge and regular
  pages), our algorithms enable, for first time, computing eviction
  sets in scenarios without any control of the mapping from virtual
  addresses to cache sets, as
  in~\cite{SchwarzJSZero2018,Irazoqui2015}.

\end{asparaitem}

\paragraph*{Summary of contributions}
Our contributions are both theoretical and practical. On the
theoretical side, we formalize the problem of finding minimal eviction
sets and devise novel algorithms that improve the state-of-the-art
from quadratic to linear. On the practical side, we perform a rigorous
empirical analysis that exhibits the conditions under which our
algorithms succeed or fail. Overall, our insights provide a basis for
principled countermeasures against, or paths for further improving the
robustness of, algorithms for finding eviction sets.

We also include a tool for evaluating, on different platforms,
all tests and algorithms presented in this paper:
\begin{center}
\url{https://github.com/cgvwzq/evsets}
\end{center}

\section{A Primer on Caching and Virtual Memory}\label{sec:background}

In this section we provide the necessary background and notation used along the paper.

\subsection{Caches}\label{ssec:caching}

Caches are fast but small memories that bridge the latency gap between the CPU and main memory. To profit from spatial locality and to reduce management overhead, main memory is logically partitioned into a set of {\em blocks}. Each block is cached as a whole in a cache {\em line} of the same size. When accessing a block, the cache logic has to determine whether the block is stored in the cache (a cache {\em hit}) or not (a cache {\em miss}). For this purpose, caches are partitioned into equally sized {\em cache sets}. The size or number of lines of cache sets is called {\em associativity} $\assoc$ (or ways) of the cache.

\paragraph*{Cache Replacement Policies}
Since the cache is much smaller than main memory, a {\em replacement policy} must decide which memory block to evict upon a cache miss.  Traditional replacement policies include least-recently used (LRU), pseudo-LRU (PLRU), and first-in first-out (FIFO).  In modern microarchitectures, replacement policies are often more complex and generally not documented. For example, recent Intel CPUs rely on replacement policies~\cite{Jaleel2010,ReplacementPolicy2013} that dynamically adapt to the workload. 
\paragraph*{Cache Hierarchies}
Modern CPU caches are organized in multiple levels, with small and fast lower-level caches per CPU core, and a larger but slower {\em last-level cache} (LLC) that is shared among different cores. The relationship between the content of different cache levels is governed by an inclusion policy.  Intel caches, for instance, are usually {\em inclusive}. This means that the content of higher level caches (L1 and L2) is always a subset of the LLC's. In particular, blocks that are evicted from the LLC are also invalidated in higher levels. In this paper we focus on inclusive LLCs.

\paragraph*{Mapping Memory Blocks to Cache Sets}
The mapping of main memory content to the cache sets of a LLC is determined by the content's physical address. For describing this mapping, consider an architecture with $\archbits$-bit physical addresses, cache lines of $2^{\linebits}$ bytes, and $2^\indexbits$ cache sets. The least significant $\linebits$ bits of a physical address $y=(b_{\archbits-1},\dots,b_0)$ form the {\em line offset} that determines the position within a cache line. Bits $(b_{\indexbits+\linebits-1},\dots,b_\linebits)$ of $y$ are the {\em set index} bits that determine the cache set, and we denote them by $\projset{y}$.  The most significant $\archbits-\linebits-\indexbits$ bits form the {\em tag} of $y$.  See Figure~\ref{fig:bitmap} for a visualization of the role of address bits on a Intel Skylake machine.

\paragraph*{Cache Slicing}
Modern Intel CPUs partition the LLC into different $2^\slicebits$ many {\em slices}, typically one or two per CPU core. The slice is determined by an undocumented $\slicebits$-bit hash of the most significant $\archbits-\linebits$ bits of the address. With slicing, the $\indexbits$ set index bits only determine the cache set {\em within} each slice.

The total cache size $\sizeof{M}=2^{\slicebits+\indexbits+\linebits}\assoc$ is then determined as the product of the number of slices, the number of cache sets per slice, the size of each line, and the associativity.

\begin{figure}[ht]
\includegraphics[width=\linewidth]{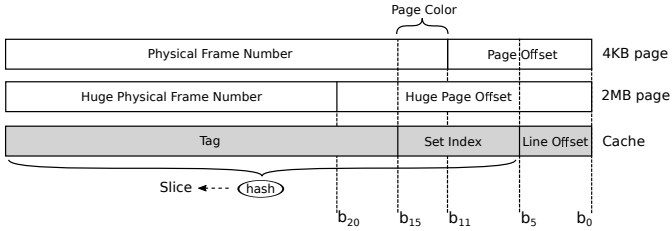}
\caption{Mapping from physical addresses to cache sets for Intel Skylake LLC, with 4 cores, $8$ slices ($\slicebits=3$), $1024$ cache sets per slice ($\indexbits=10$), lines of $64$ bytes ($\linebits=6$), and associativity $\assoc=12$.
The figure also displays page offsets and frame numbers for $4$KB pages ($\pagebits=12$) and $2$MB huge pages ($\pagebits=21$). The set index bits that
are {\em not} part of the page offset determine the {\em page color}.}\label{fig:bitmap}
\end{figure}

\subsection{Virtual Memory}\label{ssec:virtual-physical}

Virtual memory is an abstraction of the storage resources of a process that provides a linear memory space isolated from other processes and larger than the physically available resources. Operating systems, with help from the CPU's \textit{me\-mo\-ry management unit} (MMU), take care of the translation of virtual addresses to physical addresses.

\paragraph*{Virtual Address Translation}
Physical memory is partitioned in {\em pages} of size $2^\pagebits$. Common page sizes are $4$KB (i.e. $\pagebits=12$), or $2$MB for {\em huge pages}\footnote{See Appendix~\ref{apdx:hugepages} for a discussion of the availability of huge pages on different operation systems.} (i.e. $\pagebits=21$).

We model the translation from virtual to physical addresses as a function $\vptrans$ that acts as the identity on the least significant $\pagebits$ bits (named \textit{page offset}) of a virtual address $x=(x_{48},\dots,x_0)$.
That is, the virtual and physical addresses coincide on $(x_{\pagebits-1},\dots,x_0)$.  $\vptrans$ maps the most significant $48-\pagebits$ bits, named \textit{virtual page number} (VPN), to the \textit{physical frame number} (PFN). We discuss how $\vptrans$ acts on these bits in Section~\ref{subsec:probeviction}.  Figure~\ref{fig:bitmap} includes a visualization of the page offsets and physical frame numbers for small and huge pages, respectively.

\begin{figure}[ht]
\includegraphics[width=\linewidth]{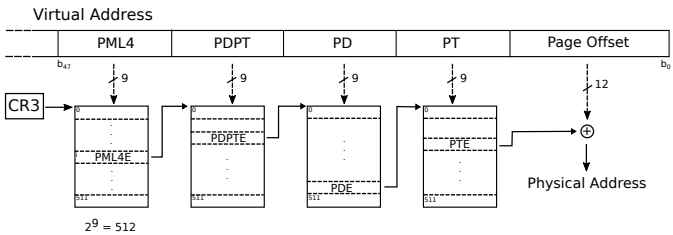}
\caption{Page walk on a 64-bit system with four levels of page tables: {\em PageMap Level 4}, {\em Page Directory Pointer}, {\em Page Directory}, and {\em Page Table} for $4$KB pages, respectively. $2$MB huge pages can be implemented by using a PD Entry directly as PT Entry. CPU's {\em Control Register 3} (CR3) points to the PML4 of the running process.}\label{fig:pgt_walk}
\end{figure}
\paragraph*{Implementing Virtual Address Translation}
Operating systems keep track of the virtual-to-physical mapping using a radix tree structure called \textit{page table} (PT) that is capable of storing the required information efficiently. Whenever a virtual address is accessed, the MMU traverses the PT until it finds the corresponding physical address. This process, also known as a \textit{page walk}, is illustrated in Figure~\ref{fig:pgt_walk}. The bits of the VPN are divided into 9-bit indexes for each level of the PT, which can store up to 512 entries (of 8 bytes each). To avoid performing a page walk for each memory access, each CPU core has a \textit{translation lookaside buffer} (TLB) that stores the most recent translations. A page walk only occurs after a TLB miss.

\section{Eviction Sets}\label{sec:eviction}

In this section we give the first formalization of eviction sets, and
present tests that enable determining whether a given set of addresses
is an eviction set. We then express the probability that a set of random
addresses forms an eviction set as a function of its size. The development
of this section forms the basis for the algorithms we develop, and for
their evaluation.

\subsection{Defining Eviction Sets}

We say that two virtual addresses $x$ and $y$ are {\em congruent},
denoted by $x\simeq y$, if they map to the same cache set.  This is
the case if and only if the set index bits $\projset{\cdot}$ and slice bits
$\projslice{\cdot}$ of their respective physical addresses
$\vptrans(x)$ and $\vptrans(y)$ coincide. That is, $x\simeq y$ if and
only if:
\begin{equation}\label{eq:sameset}
\projset{\vptrans(x)}=\projset{\vptrans(y)} \land \projslice{\vptrans(x)}=\projslice{\vptrans(y)}\
\end{equation}
Congruence is an equivalence relation. The equivalence class
$\setclass{x}$ of $x$ w.r.t. $\simeq$ is the set of virtual addresses
that maps to the same cache set as $x$. We say that addresses are {\em
  partially congruent} if they satisfy the first term of
Equation~\eqref{eq:sameset}, i.e., they coincide on the set index
bits but not necessarily on the slice bits.

We now give definitions of eviction sets, where we distinguish between
two goals: In the first, we seek to evict a specific address from the
cache. This is relevant, for example, to perform precise flushing in
rowhammer attacks.  In the second, we seek to evict the content of an
arbitrary cache set. This is relevant, for example, for high bandwidth
covert channels, where one seeks to control a large number of cache sets,
but does not care about which ones.
\begin{definition}\label{def:evictset}
  We say that a set of virtual addresses $S$ is
\begin{asparaitem}
\item an {\em eviction set for $x$} if $x\not\in S$ and at least
  $\assoc$ addresses in $S$ map to the same cache set as $x$:
\begin{equation*}\label{eq:evictsetfor}
\sizeof{\setclass{x} \cap\ S} \ge \assoc
\end{equation*}
\item an {\em eviction set (for an arbitrary address)} if there exists
  $x\in S$ such that $S\setminus\{x\}$ is an eviction set for
  $x$:
\begin{equation*}\label{eq:evictset}
\exists x:\sizeof{\setclass{x}\cap S}\ge \assoc+1
\end{equation*}
\end{asparaitem}
\end{definition}

The intuition behind Definition~\ref{def:evictset} is that
sequentially ac\-cess\-ing all elements of an eviction set for $x$ will
ensure that $x$ is {\em not} cached afterwards. Likewise, sequentially
accessing $\assoc+1$ congruent elements will guarantee that at least
one of them is being evicted.

For this intuition to hold, the cache replacement policy needs to
satisfy a condition, namely that a sequence of $\assoc$ misses to a
cache set evicts all previous content. This condition is satisfied,
for example, by all permutation-based policies~\cite{ispass04}, which
includes LRU, FIFO, and PLRU.  However, the condition is only
partially satisfied by modern (i.e. post Sandy Bridge) Intel CPUs. See
Section~\ref{sec:robustness} for a more detailed discussion.

\subsection{Testing Eviction Sets}\label{subsec:test}

Identifying eviction sets based on Definition~\ref{def:evictset} involves
checking whether~\eqref{eq:sameset} holds. This requires access to bits
of the physical addresses and cannot be performed by user programs. In
this section we present tests that rely on a timing side-channel to
determine whether a set of virtual addresses is an eviction set.
\renewcommand{\figurename}{Test}
\newcounter{tests}
\renewcommand\thefigure{\arabic{tests}}

\begin{figure}[t]
\stepcounter{tests}
\centering
\includegraphics[height=2em]{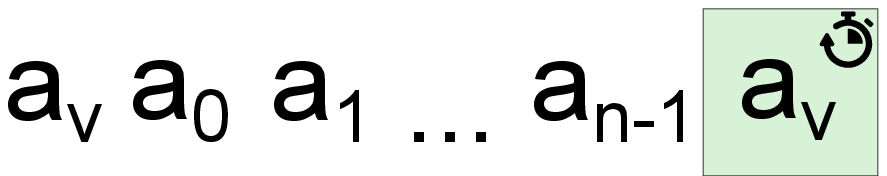}
\caption{Eviction test for a specific address $a_v$: (1) Access
  $a_v$. (2) Access $S=\{a_0,\dots,a_{n-1}\}$. (3) Access $a_v$. If the time for
  (3) is larger than a threshold, then $S$ is an eviction set
  for $a_v$.}
\label{fig:specific_test}
\end{figure}

\begin{figure}[t]
\stepcounter{tests}
\centering
\includegraphics[height=2em]{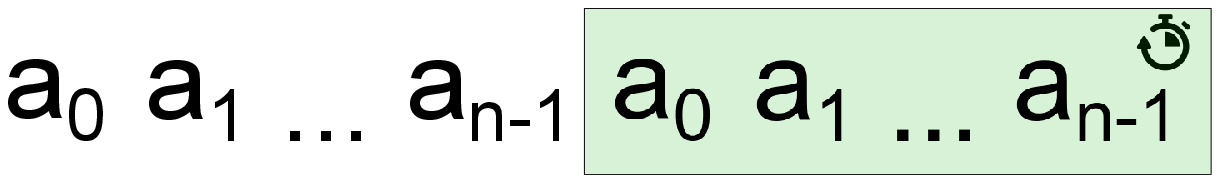}
\caption{Eviction test for an {\em arbitrary} address: (1) Access
  $S=\{a_0,\dots,a_{n-1}\}$. (2) Access $S$ again. If the overall time for
  (2) is above a threshold, $S$ is an eviction set.}
\label{fig:any_test_noisy}
\end{figure}

\begin{figure}[t]
\stepcounter{tests}
\centering
\includegraphics[height=2em]{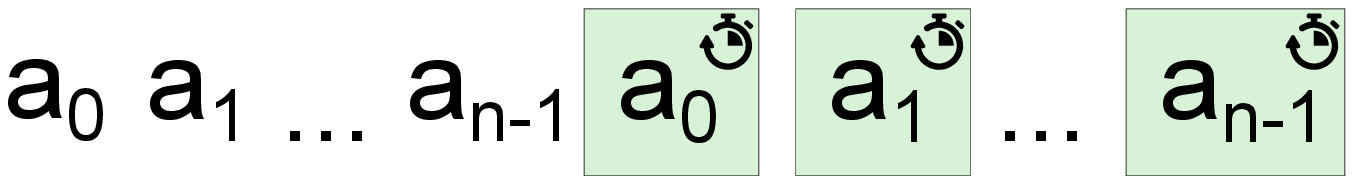}
\caption{Robust eviction test for an {\em arbitrary} address: (1) Access
  $S=\{a_0,\dots,a_{n-1}\}$. (2) Access $S$ again, measuring access time
  of each element. If the access times of more than $\assoc$
  elements in (2) is above a threshold, $S$ is an eviction set.}
\label{fig:any_test}
\end{figure}

\renewcommand{\figurename}{Fig.}
\setcounter{figure}{2}
\renewcommand\thefigure{\arabic{figure}}

\begin{asparaitem}
\item Test~\ref{fig:specific_test} from~\cite{LiuLLC2015,Irazoqui2015}
  enables user programs to check whether $S$ is an eviction set for a
  specific virtual
  address $a_v$. The test relies on the assumption that a program can
  determine whether $a_v$ is cached or not. In practice, this is
  possible whenever a program has access to a clock that enables it to
  distinguish between a cache hit or a miss.

  Test~\ref{fig:specific_test} can also be used as a basis for testing
  whether a set $S$ is an eviction set for an arbitrary address, by running
  {\sc Test}$(S\setminus\{a_i\},a_i)$, for all $a_i\in S$, and
  reporting any positive outcome. However, the number of memory
  accesses required for this is quadratic in the size of $S$.

\item Test~\ref{fig:any_test_noisy} is a more efficient solution that
  has been informally discussed in~\cite{Irazoqui2015}. The idea is to
  iterate over all the elements of $S$ twice, and measure the overall
  time of the second iteration. The first iteration ensures that all
  elements are cached. If the time for the second iteration is above a
  certain threshold, one of the elements has been evicted from the
  cache, showing that $S$ is an eviction set.  The downside of
  Test~\ref{fig:any_test_noisy} is that it is susceptible to noise, in
  the sense that any delay introduced during the second
  iteration will result in a positive answer.

\item We propose Test~\ref{fig:any_test} as a variant of
  Test~\ref{fig:any_test_noisy}, drawing inspiration from Horn's
  proof-of-concept implementation of Spectre~\cite{SpectreP0}. By
  measuring the individual time of {\em each} access instead of the
  overall access time one can (1) reduce the time window in
  which spurious events can pollute the measurements, and (2) count
  the exact number of cache misses in the second
  iteration. While this improves
  robustness to noise, it also comes with a cost in terms
  of the number of executed instructions.

\end{asparaitem}
\subsection{The Distribution of Eviction Sets}\label{subsec:probeviction}

In this section we analyze the distribution of eviction sets. More
specifically, we compute the probability that a suitably chosen set of
random virtual addresses forms an eviction set, for different degrees
of adversary control.

\subsubsection*{Choosing Candidate Sets}

For explaining what ``suitably chosen'' means, we need to distinguish
between the $\attackerbits$ set index bits of the physical addresses
that can be controlled from user space, and the
$\indexbits-\attackerbits$ bits that cannot. The value of
$\attackerbits$ depends, for example, on whether we are considering
huge or small pages.

Controlling set index bits from user space is possible because the
virtual-to-physical translation $\vptrans$ acts as the identity on the
page offset, see Section~\ref{ssec:virtual-physical}. When trying to
find a minimal eviction set, one only needs to consider virtual
addresses that coincide on those $\attackerbits$ bits.

The challenge is to find collisions on the $\indexbits-\attackerbits$
set index bits that can{\em not} be controlled from user space (the page
color bits in Figure~\ref{fig:bitmap}), as well as on the unknown
$\slicebits$ slice bits. In this section, we assume that the
virtual-to-physical translation $\vptrans$ acts as a random function
on those bits.  This assumption corresponds to the worst case from an
adversary's point of view; in reality, more detailed knowledge about
the translation can reduce the effort for finding eviction
sets~\cite{BosmanDedup2016}.

Whenever we speak about ``choosing a random set of virtual addresses''
of a given size in this paper, we hence refer to choosing random
virtual addresses that coincide on all $\attackerbits$ set index bits
under control. We now determine the probability of such a set to be
an eviction set.

\subsubsection*{Probability of Colliding Virtual Addresses}

We first compute the probability that two virtual addresses $y$ and
$x$ that coincide on the $\attackerbits$ user-controlled set index
bits are actually congruent. We call this event a {\em collision} and
denote it by $C$. As $\vptrans$ acts as a random function on the
remaining $\indexbits-\attackerbits$ set index bits and $\slicebits$
slice bits, we have:
\begin{equation*}
P(C)=2^{\attackerbits-\indexbits-\slicebits}
\end{equation*}

The following example illustrates how removing adversary control
increases the difficulty of finding collisions on common cache
configurations.

\begin{example}\label{ex:range}
  Consider the cache from Figure~\ref{fig:bitmap} with 8 slices
  (i.e. $\slicebits=3$) of $1024$ cache sets each
  (i.e. $\indexbits=10$).
\begin{asparaitem}
\item With huge pages (i.e. $\pagebits=21$), the
  attacker controls all of the set index bits,
  i.e. $\attackerbits=\indexbits$, hence the probability of a
  collision $P(C)=2^{-3}$ is given by the number of slices.
\item With pages of $4$KB (i.e. $\pagebits=12$), the number of
  bits under control is $\attackerbits=\pagebits-\linebits=6$, hence
  the probability of finding a collision is $P(C)=2^{6-10-3}=2^{-7}$.
\item The limit case (i.e. $\pagebits=\linebits=6$) corresponds to an
  adversary that has no control whatsoever over the mapping of virtual
  addresses to set index bits and slice bits -- besides the fact that
  a virtual address always maps to the same physical address. This
  case corresponds to adding a permutation layer to all
  adversary-controlled bits (see, e.g. \cite{SchwarzJSZero2018}) and
  is a candidate for a countermeasure that makes finding eviction sets
  intractable. For this case we obtain $P(C)=2^{-10-3}=2^{-13}$.
\end{asparaitem}
\end{example}

\subsubsection*{Probability of a Set to be an Eviction Set for $x$}

We analyze the probability of a set of virtual addresses $S$ to be
an eviction set for a given address $x$. This probability can be expressed in
terms of a binomially distributed random variable $X \sim B(N, p)$
with parameters $N=\sizeof{S}$ and $p=P(C)$.
With such an $X$, the probability of finding $k$ collisions, i.e., $\sizeof{S\cap\setclass{x}}=k$, is given by:
\begin{equation*}
P(X=k)=\binom{N}{k}p^k(1-p)^{N-k}
\end{equation*}
According to Definition~\ref{def:evictset}, $S$ is an eviction set if it contains at least $\assoc$ addresses that are congruent with $x$, see~\eqref{eq:evictsetfor}. The probability of this happening is given by:
\begin{align*}
P(\sizeof{S\cap \setclass{x}}\ge \assoc) & = 1 - P(X < \assoc) \\
& = 1 - \sum_{k=0}^{\assoc-1} \binom{N}{k}p^k(1-p)^{N-k}
\end{align*}

Figure~\ref{fig:probabilities} depicts the distribution of sets to be
an eviction set for $x$, based on the cache from Figure~\ref{fig:bitmap}.

\subsubsection*{Probability of a Set to be an Eviction Set for an arbitrary address}
We analyze the probability that a set $S$ contains at least $\assoc+1$
addresses that map to the same cache set. To this end, we view the
problem as a cell occupancy problem.

Namely, we consider
$B=2^{\slicebits+\indexbits-\attackerbits}$ possible cache
sets (or bins) with $N=\sizeof{S}$ addresses (or balls) that are
uniformly distributed, and ask for the probability of filling at least
one set (or bin) with more than $\assoc$ addresses (or balls).

We model this probability using random variables $N_1,\dots,N_B$,
where $N_i$ represent the number of addresses mapping to the $i$-th
cache set, with the constraint that $N = N_1 + ... + N_B$. With this,
the probability of having at least one set with more than $\assoc$
addresses can be reduced to the complementary event of all $N_i$
being less or equal than $\assoc$:

\begin{equation*}
P(\exists i \mid N_i > \assoc) = 1 - P(N_1\leq \assoc,...,N_B\leq
\assoc)
\end{equation*}
The right-hand side is a cumulative multinomial distribution, whose
exact combinatorial analysis is expensive for large values of $N$ and
becomes unpractical for our purpose. Instead, we rely on a well-known
approximation based on Poisson distributions~\cite{levin1981} for calculating
the probabilities.

Figure~\ref{fig:probabilities} depicts the distribution of sets to be
an eviction set for an arbitrary address, based on the cache from
Figure~\ref{fig:bitmap}.  We observe that the probability of the
multinomial grows faster with the set size than the binomial
distribution. This shows that a set is more likely an eviction set for
an arbitrary address than it is for a specific address.

\begin{figure}[ht]
\begin{tikzpicture}[baseline=(current axis.outer east)]
	\begin{axis}[
        width=\columnwidth,
        height=0.4\columnwidth,
        try min ticks=5,
        xlabel=Set Size,
	]
	\addplot[smooth,mark size=1pt,color=red,mark=x] coordinates {
        (0,0.0)
        (492,0.001)
        (696,0.01)
        (888,0.05)
		(1008,0.1)
		(1164,0.2)
		(1284,0.3)
		(1392,0.4)
		(1500,0.5)
		(1608,0.6)
		(1740,0.7)
		(1896,0.8)
		(2124,0.9)
		(3420,1.0)
        (4000,1.0)
	};
	\addplot[smooth,mark size=0.8pt,color=blue,mark=*] coordinates {
        (0,0.0)
        (348,0.001)
        (456,0.011)
        (540,0.053)
		(588,0.1)
		(636,0.2)
		(672,0.3)
		(696,0.4)
		(732,0.5)
		(756,0.6)
		(780,0.7)
		(804,0.8)
		(840,0.9)
		(984,1.0)
        (1000,1.0)
        (2000,1.0)
        (3000,1.0)
        (4000,1.0)
	};
	\end{axis}
\end{tikzpicture}
\caption{Probability of random sets to be eviction sets as a function of their size, based on our theoretical models. We use $P(C)=2^{-7}$ to represent an attacker with $4$KB pages in the machine from Figure~\ref{fig:bitmap}. The {\em blue-circle} line shows the multinomial model for an ``arbitrary'' eviction set. The {\em red-cross} line shows the binomial model for an ``specific'' eviction set.}
\label{fig:probabilities}
\end{figure}
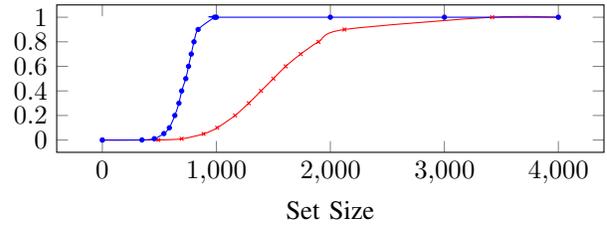

\subsubsection*{Cost of Finding Eviction Sets}

We conclude this section by computing the cost (in terms of
the expected number of memory accesses required) of finding an eviction set of
size $N$ by repeatedly and independently selecting and testing candidate
sets.

To this end, we model the repeated independent choice of eviction sets
as a geometric distribution over the probability $p(N)$ that a
candidate set of size $N$ is an eviction set. The expectation $1/p(N)$ of
this distribution captures the expected number of candidate sets that
must be tested until we find an eviction set. Assuming that a test of
a set of size $N$ requires $\oo(N)$ memory accesses, as in
Section~\ref{subsec:test}, this yields an overall cost in terms of
memory accesses for finding an initial eviction set of
$\oo(N/p(N))$.

\begin{figure}[H]
\includegraphics[width=.9\columnwidth]{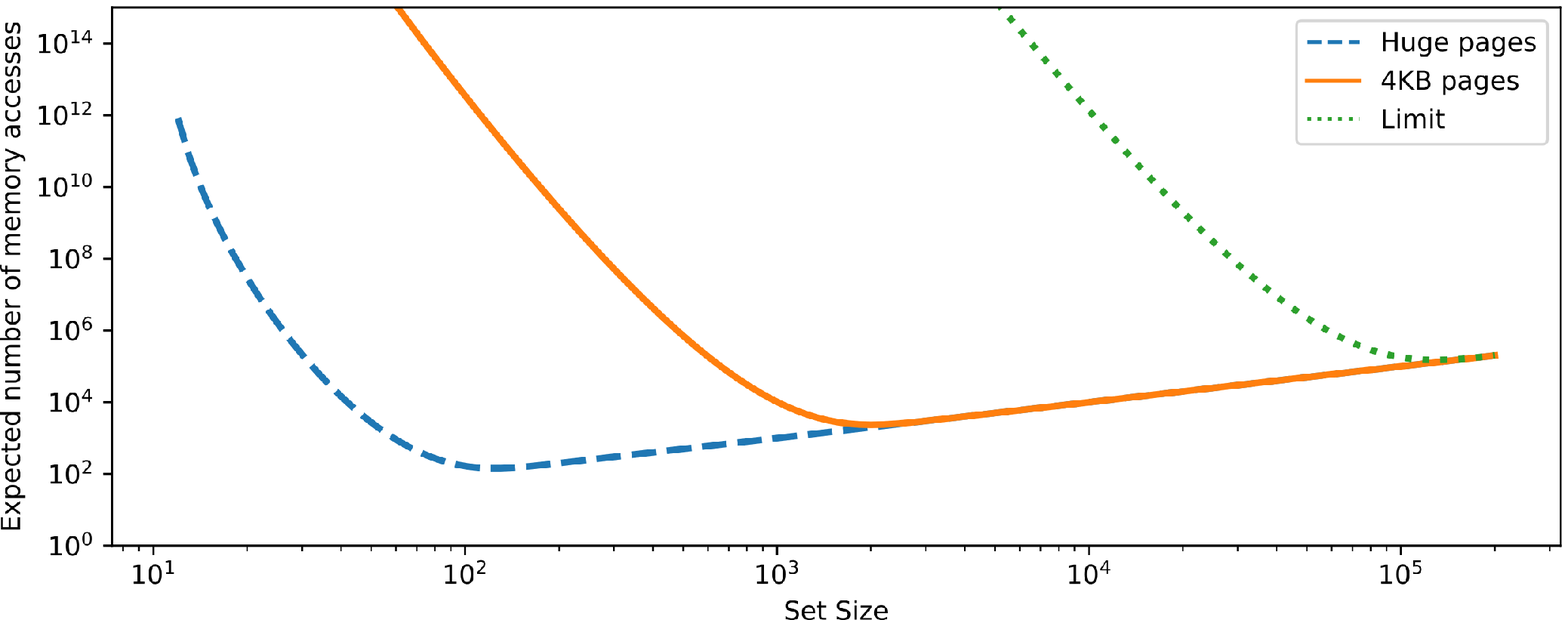}
\caption{Expected number of memory accesses for finding an eviction set as
a function of its size. The {\em dashed blue} line represents $P(C)=2^{-3}$,
an attacker with huge pages (i.e. controls all $\attackerbits=10$ set index
bits). The {\em plain orange} line represents $P(C)=2^{-7}$, an attacker
with $4$KB pages (i.e. controls $\attackerbits=6$). The {\em dotted
  green} line
represents $P(C)=2^{-13}$, an attacker w/o any control over the set index
bits (i.e. $\attackerbits=0$).}
\label{fig:cost_per_size}
\end{figure}

Figure~\ref{fig:cost_per_size} depicts the cost function $N/p(N)$ for
the adversaries from Example~\ref{ex:range}) for finding eviction sets
for a specific address, and highlights the most favorable sizes for
finding eviction sets. Since probability grows with set size, finding
an eviction set of small size requires, in expectation, large number
of trials. Once the probability stabilizes (i.e. the set is large
enough), we start seeing the linear cost of the test.

\section{Algorithms for Computing Minimal Eviction Sets}\label{sec:algorithms}

The probability that a set of virtual addresses forms an eviction set depends on its size, on the cache settings (e.g., associativity and number of cache sets), and on the amount of control an adversary has over the physical addresses. In particular, a small set of random virtual addresses is unlikely to be an eviction set. This motivates the two-step approach for finding minimal eviction sets in which one (1) first identifies a large eviction set, and (2) then reduces this set to its minimal core.

Previous proposals in the literature rely on this two-step approach. In this section we first present the baseline reduction from the literature, which requires $\oo(N^2)$ memory accesses. We then show that it is possible to perform the reduction using only $\oo(N)$ memory accesses, which enables dealing with much larger initial eviction sets than before.

The main practical implication of this result is that finding minimal eviction sets from user (or sandboxed) space is faster than previously thought, and hence practical even without any control over the slice or set index bits. This renders countermeasures based on reducing adversary control over these bits futile.

\subsection{The Baseline Algorithm}

We revisit the baseline algorithm for reducing eviction sets that has been informally described in the literature. Its pseudocode is given as Algorithm~\ref{alg:naivetesting}.

Algorithm~\ref{alg:naivetesting} receives as input a virtual address $x$ and an eviction set $S$ for $x$. It proceeds by picking an address $c$ from $S$ and tests whether $S\setminus\{c\}$ is still evicting $x$, see line~\ref{l:check}. If it is {\em not} (notably the {\em if}-branch), $c$ must be congruent to $x$ and is recorded in $R$, see line~\ref{l:record}. The algorithm then removes $c$ from $S$ in line~\ref{l:remove} and loops in line~\ref{l:recurse}.

 Note that the eviction test {\sc Test} is applied to $R\cup(S\setminus\{c\})$ in line~\ref{l:check}, i.e. all congruent elements found so far are included. This enables scanning $S$ for congruent elements even when there are less than $\assoc$ of them left. The algorithm terminates when $R$ forms a minimal eviction set of $\assoc$ elements, which is guaranteed to happen because $S$ is initially an eviction set.
\begin{proposition}\label{prop:naive}
Algorithm~\ref{alg:naivetesting} reduces an eviction set $S$ to its minimal core in $\oo(N^2)$ memory accesses, where $N = \sizeof{S}$.
\end{proposition}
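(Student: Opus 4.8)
The plan is to prove the two assertions of the proposition separately: first that the algorithm is correct (it terminates with $R$ equal to a minimal eviction set for $x$), and then that it does so within $\oo(N^2)$ memory accesses. The quantitative heart of the statement is the complexity bound, which I would establish by counting the number of loop iterations and the cost incurred in each.

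For the iteration count, I would observe that line~\ref{l:remove} removes exactly one element $c$ from $S$ on every pass of the loop, regardless of whether $c$ is recorded in $R$ or simply discarded. Since $\sizeof{S} = N$ initially and $S$ never grows, the loop executes at most $N$ times. For the per-iteration cost, the only source of memory accesses is the single call to {\sc Test} in line~\ref{l:check}, applied to $R \cup (S \setminus \{c\})$. I would track the quantity $\sizeof{R} + \sizeof{S}$ and note that it never increases: a recorded element merely migrates from $S$ to $R$, while a discarded one leaves $S$ altogether. Hence it stays bounded by $N$, so the argument of {\sc Test} has size at most $N-1$. By the analysis of Section~\ref{subsec:test}, such a test costs $\oo(N)$ memory accesses. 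Multiplying the two bounds gives $\oo(N) \cdot \oo(N) = \oo(N^2)$.

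For correctness I would establish the loop invariant that $R \subseteq \setclass{x}$ and that $R \cup S$ is always an eviction set for $x$, i.e.\ $\sizeof{\setclass{x} \cap (R \cup S)} \ge \assoc$. The invariant holds initially since $R = \emptyset$ and $S$ is an eviction set for $x$ by hypothesis. For the inductive step I would split on the outcome of the test: a positive test certifies that $R \cup (S \setminus \{c\})$ still contains at least $\assoc$ congruent elements, so discarding $c$ preserves the invariant; a negative test (the if-branch) forces $\sizeof{\setclass{x} \cap (R \cup (S \setminus \{c\}))} = \assoc - 1$, which can occur only if $c$ itself is congruent to $x$, so recording $c$ in $R$ both keeps $R \subseteq \setclass{x}$ and leaves $R \cup S$ unchanged. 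Because $S$ strictly shrinks while the invariant maintains $\sizeof{R} + \sizeof{\setclass{x} \cap S} \ge \assoc$, the recorded set $R$ must eventually accumulate $\assoc$ congruent elements, at which point it is a minimal eviction set and the algorithm halts.

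The routine parts, namely bounding the iteration count and the test cost, are immediate. The step requiring the most care is the correctness invariant, specifically the inference in the negative-test branch that a drop below the threshold pins $c$ as congruent and that $\sizeof{R}$ can never overshoot $\assoc$. This relies on the fact that removing a single element decreases the congruent count by at most one, so the test transitions sharply from ``$\ge \assoc$'' to ``$= \assoc - 1$'' exactly when a necessary congruent element is pulled out.
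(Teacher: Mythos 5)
Your proposal is correct and follows essentially the same route as the paper: the paper's proof is exactly your counting argument, observing that $\sizeof{S}$ bounds both the number of loop iterations and the argument size of each call to {\sc Test}, giving $\oo(N^2)$ memory accesses. Your loop invariant ($R \subseteq \setclass{x}$ and $R \cup S$ remains an eviction set, with the sharp drop to $\assoc-1$ pinning $c$ as congruent) is a careful formalization of the correctness argument the paper only sketches informally in the text preceding the proposition.
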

The complexity bound follows because $\sizeof{S}$ is an upper bound for the number of loop iterations as well as on the argument size of Test~\ref{alg:naivetesting} and hence the number of memory accesses performed during each call to {\sc test}.

\begin{algorithm}[tb]
\caption{Baseline Reduction}\label{alg:naivetesting}
\hspace*{\algorithmicindent} \textbf{In:} $S$=candidate set, $x$=victim address \\
\hspace*{\algorithmicindent} \textbf{Out:} $R$=minimal eviction set for $v$\\
\begin{algorithmic}[1]
\State {$R \leftarrow \{\}$}
\While{$|R| < \assoc$}\label{l:recurse}
\State $c\leftarrow$ \Call{$pick$}{$S$}\label{l:pick}
\If{$\neg$\Call{test}{$R \cup (S\setminus\{c\})$, $x$}}\label{l:check}
\State {$R \leftarrow R \cup \{c\}$}\label{l:record}
\EndIf
\State $S \leftarrow S\setminus\{c\}$\label{l:remove}
\EndWhile
\State \Return $R$
\end{algorithmic}
\end{algorithm}

The literature contains different variants of
Algorithm~\ref{alg:naivetesting}\cite{LiuLLC2015, Irazoqui2015, OrenSpy2015}.
For example, the variant presented in~\cite{OrenSpy2015} always puts $c$ back
into $S$ and keeps iterating until $\sizeof{S} = \assoc$. This still is
asymptotically quadratic, but adds some redundancy that helps to combat errors.

If the quadratic baseline was optimal, one could think about preventing
an adversary from computing small eviction sets by reducing or removing
control over the set index bits, either by adding a mask via
hardware~\cite{Irazoqui2015} or a permutation layer via
software~\cite{SchwarzJSZero2018} (see limit case in Example~\ref{ex:range}).

\subsection{Computing Minimal Eviction Sets for a Specific Address}

We present a novel algorithm that performs the reduction of eviction sets to their minimal core in $\mathcal{O}(N)$ memory accesses, where $N = \sizeof{S}$. This enables dealing with much larger eviction sets than the quadratic baseline and renders countermeasures based on hiding the physical address futile. Our algorithm is based on ideas from threshold group testing, which we briefly introduce first.

\subsubsection{Threshold Group Testing}\label{ssec:grouptesting}

Group testing refers to procedures that break up the task of identifying elements with a desired property by tests on sets (i.e. {\em groups} of those elements). Group testing has been proposed for identifying diseases via blood tests where, by testing {\em pools} of blood samples rather than individual samples, one can reduce the number of tests required to find all positive individuals from linear to logarithmic in the population size~\cite{Dorfman1943}.

{\em Threshold} group testing~\cite{Damaschke2006} refers to group testing based on
 tests that give a negative answer if the number of positive individuals in the tested set is at most $\lowth$, a positive answer if the number is at least $\upth$, and any answer if it is in-between $\lowth$ and $\upth$.
Here, $\lowth$ and $\upth$ are natural numbers that represent lower and upper thresholds, respectively.

\subsubsection{A Linear-Time Algorithm for Computing Minimal Eviction Sets}\label{sec:lintime}

The key insight behind our algorithm is that testing whether a set of virtual addresses $S$ evicts $x$ (see Test~\ref{fig:specific_test}) can actually be seen as a threshold group test for congruence with $x$, where $\lowth=\assoc-1$ and $\upth=\assoc$. This is because the test gives a positive answer if $\sizeof{\setclass{x} \cap S} \ge \assoc$, and a negative answer otherwise. This connection allows us to leverage the following result from the group testing literature for computing minimal eviction sets.

\begin{lemma}[\cite{Damaschke2006}]\label{lem:numtests}
  If a set $\univset$ contains $\genassoc$ or more positive elements, one can identify $\genassoc$ of them using $\oo{(\genassoc \log \sizeof{\univset})}$ threshold
  group tests with $\lowth=\genassoc-1$ and $\upth=\genassoc$.
\end{lemma}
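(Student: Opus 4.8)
The plan is to reduce the task to finding, one at a time, a single positive element and moving it into a pool $R$ of confirmed positives, so that after $\genassoc$ repetitions $R$ contains the required $\genassoc$ positives. The whole argument hinges on one observation about the test: since $\upth = \lowth + 1 = \genassoc$, the threshold test on a set returns a positive answer exactly when the set holds at least $\genassoc$ positives and a negative answer otherwise, with no ambiguous region. Consequently, padding any candidate set with a pool $R$ of $q$ \emph{known} positives turns the test into a sharp check of whether the candidate contributes at least $\genassoc - q$ further positives.

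First I would set up the invariant that $R$ consists solely of positives with $\sizeof{R} = q < \genassoc$, that the remaining search space $W \subseteq \univset$ is disjoint from $R$, and that $R \cup W$ still contains at least $\genassoc$ positives (so $W$ contains at least $\genassoc - q \ge 1$ of them). Initially $R = \emptyset$, $q = 0$, and $W = \univset$, which holds by hypothesis. To extract one more positive I would fix an arbitrary ordering $w_1, \dots, w_k$ of $W$, form the prefixes $W_i = \{w_1, \dots, w_i\}$, and track the count $f(i) = q + \sizeof{\{\,j \le i : w_j \text{ positive}\,\}}$. This $f$ is nondecreasing and grows by at most one per step, and the test on $R \cup W_i$ is positive exactly when $f(i) \ge \genassoc$. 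Since $f(0) = q < \genassoc$ while $f(k) \ge \genassoc$, a binary search over $i$ locates, using $\oo(\log \sizeof{W}) = \oo(\log \sizeof{\univset})$ tests, the smallest index $i^\ast$ at which the test flips from negative to positive.

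The key step is to argue that the pinpointed element $w_{i^\ast}$ is necessarily positive: because $f$ increases by at most one per element, the jump from $f(i^\ast-1) \le \genassoc-1$ to $f(i^\ast) \ge \genassoc$ forces $f(i^\ast-1) = \genassoc-1$ and $f(i^\ast) = \genassoc$, and this unit increase can only come from $w_{i^\ast}$ being positive. I would then move $w_{i^\ast}$ from $W$ into $R$, raising $\sizeof{R}$ to $q+1$ while leaving at least $\genassoc - (q+1)$ positives in the updated $W$, so the invariant is preserved. Repeating the extraction $\genassoc$ times yields $\sizeof{R} = \genassoc$ confirmed positives at a cost of $\oo(\log \sizeof{\univset})$ tests per round, hence $\oo(\genassoc \log \sizeof{\univset})$ tests overall.

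I expect the main obstacle to be in the bookkeeping rather than in any deep difficulty: one must verify that the prefix outcomes are genuinely monotone, that the flip always occurs (which relies on both $f(0) < \genassoc$ and $f(k) \ge \genassoc$, guaranteed by the invariant), and that reclassifying each found positive from $W$ into the padding $R$ keeps the ``at least $\genassoc$ positives remaining'' property intact across all $\genassoc$ rounds. It is essential that the test be used exactly at $\lowth = \genassoc-1$, $\upth = \genassoc$: were there a genuine gap ($\upth > \lowth + 1$), the test could answer arbitrarily whenever a prefix count lands strictly between the thresholds, the prefix outcomes would no longer form a clean monotone boolean, and the single-positive guarantee underlying the whole argument would fail.
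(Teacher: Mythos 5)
Your proof is correct, but it takes a genuinely different route from the paper's. The paper's proof partitions the current set $S$ into $\genassoc+1$ disjoint, roughly equal blocks $T_1,\dots,T_{\genassoc+1}$, invokes a pigeonhole argument (Appendix~\ref{apdx:pidgeonhole}) to show some block $T_j$ can be discarded with $S\setminus T_j$ still containing $\genassoc$ positives, finds such a $j$ with at most $\genassoc+1$ tests, and recurses on $S\setminus T_j$, which has size $\sizeof{S}\frac{\genassoc}{\genassoc+1}$; no individual positive is certified until the whole set has shrunk to exactly $\genassoc$ elements. You instead certify positives one at a time: binary search over prefixes of the remaining pool, padded with the confirmed positives $R$, exploiting gap-freeness ($\upth=\lowth+1$) to make the prefix outcomes a monotone boolean sequence and to force the flip index onto a genuine positive. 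Your invariant maintenance and the unit-step argument for $f$ are sound, and your approach has the virtue of being self-contained and of delivering the stated $\oo(\genassoc\log\sizeof{\univset})$ test count directly, whereas the paper's sketch, counted naively, spends up to $\genassoc+1$ tests in each of $\Theta(\genassoc\log\sizeof{\univset})$ iterations and defers the sharper accounting to Damaschke. What the paper's decomposition buys, however, and yours does not, is geometric shrinkage of the \emph{tested sets}: in the eviction-set application each test costs memory accesses proportional to the size of the set tested, so the paper's recursion totals $\oo(\assoc^2 N)$ accesses (Proposition~\ref{prop:numacc}), while your binary searches always test sets of size $\Theta(\sizeof{\univset})$, for a total of $\Theta(\genassoc\sizeof{\univset}\log\sizeof{\univset})$ accesses --- losing the linear-time reduction that is the paper's main point. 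Incidentally, your trick of padding tests with the confirmed set $R$ is exactly the device used in the paper's baseline Algorithm~\ref{alg:naivetesting}.
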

\begin{proof}
 The idea behind Lemma~\ref{lem:numtests} is to partition $S$ in $\genassoc+1$ disjoint subsets $T_1,\dots,T_{\genassoc+1}$ of (approximately) the same size. A counting argument (see Appendix~\ref{apdx:pidgeonhole}) shows that there is at least one $j\in\{1,\dots,\genassoc+1\}$ such that $S \setminus T_{j}$ is still an eviction set. One identifies such a $j$ by group tests and repeats the procedure on $S\setminus T_{j}$. The logarithmic complexity is due to the fact that $\sizeof{S\setminus T_{j}}=\sizeof{S}\frac{\genassoc}{\genassoc+1}$, i.e. each iteration reduces the eviction set by a factor of its size,
 rather by than a constant as in Algorithm~\ref{alg:naivetesting}.
\end{proof}%

\begin{algorithm}[tb]
\caption{Reduction Via Group Testing}\label{alg:grouptesting}
\hspace*{\algorithmicindent} \textbf{In :} $S$=candidate set, $x$=victim address\\
\hspace*{\algorithmicindent} \textbf{Out :} $R$=minimal eviction set for $x$ \\
\begin{algorithmic}[1]
  \While{$|S| > \assoc$}
	\State {$\{T_{1}, ..., T_{\assoc+1}\} \leftarrow$} \Call{$split$}{$S$, $\assoc+1$}
	\State {$i \leftarrow 1$}
	\While{$\neg$\Call{test}{$S \setminus T_{i}$, $x$}}
		\State {$i \leftarrow i + 1$}
	\EndWhile
	\State {$S \leftarrow S \setminus T_{i}$}
\EndWhile
\State \Return $S$
\end{algorithmic}
\end{algorithm}

Algorithm~\ref{alg:grouptesting} computes minimal eviction sets based on this idea. Note that Lemma~\ref{lem:numtests} gives a bound on the number of group tests. For computing eviction sets, however, the relevant complexity measure is the total number of memory accesses made, i.e. the sum of the sizes of the sets on which tests are performed. We next show that, with this complexity measure, Algorithm~\ref{alg:grouptesting} is linear in the size of the initial eviction set.

\begin{proposition}\label{prop:numacc}
  Algorithm~\ref{alg:grouptesting} with Test 1 reduces an eviction set $S$ to its minimal core using $O(\assoc^2 N)$ memory accesses, where $N = \sizeof{S}$.
\end{proposition}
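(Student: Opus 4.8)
The plan is to bound the cost measure directly: the total number of memory accesses equals, up to a constant factor, the sum of the sizes of all sets on which \textsc{test} is invoked, because a single call to Test~\ref{fig:specific_test} on a set of size $m$ performs $m+2 = \oo(m)$ accesses. I would therefore decompose this sum over the outer iterations of Algorithm~\ref{alg:grouptesting}, charging to each iteration the number of inner tests it issues times the size of the sets involved, and then sum the resulting geometric series.

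First I would bound the work of a single outer iteration. Let $N_k$ denote $\sizeof{S}$ at the start of the $k$-th outer iteration, so $N_0 = N$. By Lemma~\ref{lem:numtests} (more precisely, by the counting argument it relies on), at least one of the $\assoc+1$ blocks $T_1,\dots,T_{\assoc+1}$ is such that $S\setminus T_i$ is still an eviction set; hence the inner \textbf{while} loop is guaranteed to halt, and in the worst case it issues at most $\assoc+1$ tests (when the working block is the last one examined). Each such test runs on a set $S\setminus T_i$ of size at most $N_k$, so the $k$-th iteration costs at most $(\assoc+1)\cdot\oo(N_k) = \oo(\assoc\,N_k)$ memory accesses. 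Next I would exploit the geometric shrinkage: since \textsc{split} partitions $S$ into $\assoc+1$ blocks of (approximately) equal size, removing one block leaves $N_{k+1} = N_k\cdot\frac{\assoc}{\assoc+1}$ up to rounding, so $N_k = N\,(\frac{\assoc}{\assoc+1})^k$. Summing over all iterations and extending to infinity gives
\[
\sum_{k\ge 0}\oo(\assoc\,N_k) = \oo(\assoc)\cdot N\sum_{k\ge 0}\Big(\tfrac{\assoc}{\assoc+1}\Big)^k = \oo(\assoc)\cdot N\,(\assoc+1) = \oo(\assoc^2 N),
\]
using $\sum_{k\ge 0}(\frac{\assoc}{\assoc+1})^k = \assoc+1$. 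Termination and correctness then come essentially for free: the size strictly decreases until $\sizeof{S}=\assoc$, at which point $S$ is an eviction set of exactly $\assoc$ elements, i.e.\ its minimal core.

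The hard part will be accounting honestly for the two independent factors of $\assoc$. One factor arises from the worst-case length $\assoc+1$ of the inner loop; the other from the geometric series, whose ratio $\frac{\assoc}{\assoc+1}$ is close to $1$ and therefore sums to $\assoc+1$ rather than to a constant. It is tempting but wasteful to plug the $\oo(\assoc\log N)$ test-count bound of Lemma~\ref{lem:numtests} into an $\oo(N)$-per-test estimate, which would yield only $\oo(\assoc N\log N)$; the whole point of the argument is that charging each test its \emph{actual}, geometrically shrinking size makes the total collapse to linear in $N$. A minor technical nuisance to dispatch is the rounding in \textsc{split}: when $\sizeof{S}$ is not divisible by $\assoc+1$ the decay factor is only approximately $\frac{\assoc}{\assoc+1}$, which contributes lower-order terms (on the order of $\oo(\assoc^2\log N)$ overall, since there are $\oo(\assoc\log N)$ iterations) that are absorbed into $\oo(\assoc^2 N)$.
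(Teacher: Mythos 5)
Your proof is correct and follows essentially the same route as the paper: the paper states the recurrence $T(N)=T(N\frac{\assoc}{\assoc+1})+N\assoc$ in the main text and, in Appendix~\ref{apdx:proofcost}, unrolls it into exactly your geometric series with ratio $\frac{\assoc}{\assoc+1}$, obtaining $T(N)=\assoc^2N+\assoc N-\assoc^3-\assoc^2$ from the same two $\assoc$-factors you identify (at most $\assoc+1$ tests per level, and the series summing to $\assoc+1$). Your handling of the rounding in \textsc{split} as an absorbed lower-order term also matches the paper's (even more casual) treatment of the ceiling operator.
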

\begin{proof}
  The correctness of Algorithm~\ref{alg:grouptesting} follows from the invariant that $S$ is an eviction set and that it satisfies $\sizeof{S}=\assoc$ upon termination, see Lemma~\ref{lem:numtests}. For the proof of the complexity bound observe that the number of memory accesses performed by Algorithm~\ref{alg:grouptesting} on a set $S$ of size $N$ follows the following recurrence.
\begin{equation}\label{eq:grouptesting}
T(N)=T(N\frac{\assoc}{\assoc+1}) + N\cdot\assoc
\end{equation}
for $N > \assoc$, and $T(\assoc)=\assoc$.
The recurrence holds because, on input $S$, the algorithm applies threshold group tests on $\assoc+1$ subsets of $S$, each of size $N-\frac{N}{\assoc+1}$. The overall cost for the split and the tests is $N \cdot\assoc$.  The algorithm recurses on exactly one of these subsets of $S$, which has size $N\frac{\assoc}{\assoc+1}$. From the Master theorem~\cite{cormen5introduction} it follows that $T(N)\in\Theta(N)$.
\end{proof}
See Appendix~\ref{apdx:proofcost} for a direct proof that also includes the quadratic dependency on associativity.

\subsection{Computing Minimal Eviction Set for an Arbitrary Address}

The Algorithms presented so far compute minimal eviction sets for a specific virtual address $x$. We now consider the case of computing minimal eviction sets for an arbitrary address. This case is interesting because, as shown in Section~\ref{subsec:probeviction}, a set of virtual addresses is more likely to evict any arbitrary address than a specific one. That is, in scenarios where the target address is not relevant, one can start the reduction with smaller candidate sets.

The key observation is that both Algorithm~\ref{alg:naivetesting} and Algorithm~\ref{alg:grouptesting} can be easily adapted to compute eviction sets for an arbitrary address. This only requires replacing the eviction test for a specific address (Test~\ref{fig:specific_test}) by an eviction test for an arbitrary address (Test~\ref{fig:any_test}).

\begin{proposition}
Algorithm~\ref{alg:naivetesting}, with Test~\ref{fig:any_test} for an {\em arbitrary} eviction set, reduces an eviction set to its minimal core in $\oo(N^2)$ memory accesses, where $N = \sizeof{S}$.
\end{proposition}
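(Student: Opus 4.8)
The plan is to follow the proof of Proposition~\ref{prop:naive} almost verbatim, since the only change is that the test in line~\ref{l:check} becomes Test~\ref{fig:any_test} applied to $R\cup(S\setminus\{c\})$ (with no explicit target address), while the control flow of Algorithm~\ref{alg:naivetesting} is otherwise untouched. First I would note that line~\ref{l:remove} removes exactly one element from $S$ per iteration, so the number of loop iterations is still bounded by $N=\sizeof{S}$. I would then record the invariant that $R$ and $S$ remain disjoint and that $R\cup S$ never grows: when $c$ is discarded the set loses one element, and when $c$ is recorded the set $R\cup S$ is unchanged. Consequently every argument passed to the test has size at most $N$.

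Given these two observations, the complexity bound reduces to checking that Test~\ref{fig:any_test} is linear in its argument size. This holds because the test simply iterates over the set twice---once to prime the cache lines and once to measure and count the misses per element---so it performs $\oo(m)$ memory accesses on a set of size $m$, just as Test~\ref{fig:specific_test} does. Combining the at most $N$ iterations with the cost of at most $\oo(N)$ memory accesses per iteration yields the claimed $\oo(N^2)$ bound, exactly mirroring the argument for Proposition~\ref{prop:naive}.

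The only genuinely new content, and the step I expect to be the main obstacle, is the \emph{correctness} of the adaptation. I would establish it via the invariant that $R\cup S$ is an eviction set for an arbitrary address and that the addresses accumulated in $R$ are mutually congruent, so that upon termination (now when $\sizeof{R}=\assoc+1$, the size of a minimal arbitrary eviction set) $R$ is the promised minimal core. The subtlety, absent in the specific-address setting, is that no fixed target $x$ pins down the relevant congruence class in advance. I would handle this by arguing that a negative answer of Test~\ref{fig:any_test} on $R\cup(S\setminus\{c\})$ forces $c$ to lie in the one congruence class whose occupancy has just dropped from $\assoc+1$ to $\assoc$, so the algorithm implicitly commits to that class as soon as it records its first essential element in line~\ref{l:record}. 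This step hinges on the fact that Test~\ref{fig:any_test} counts misses per element, which lets it detect the threshold $\assoc+1$ unambiguously even when several near-threshold classes coexist in the candidate set; verifying that this detection remains reliable throughout the reduction is the crux of the correctness proof.
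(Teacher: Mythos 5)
Your proof is correct, and its complexity core is exactly the paper's: the paper dispatches this proposition in a single remark, namely that Test~\ref{fig:any_test}, like Test~\ref{fig:specific_test}, is linear in the size of the tested set, so the $\oo(N^2)$ bound of Proposition~\ref{prop:naive} (at most $N$ loop iterations, each invoking the test on an argument of size at most $N$) transfers verbatim. Where you genuinely go beyond the paper is correctness: the paper merely asserts that Algorithm~\ref{alg:naivetesting} ``can be easily adapted'' by swapping the test, whereas you spell out the adaptation (termination at $\sizeof{R}=\assoc+1$, the size of a minimal arbitrary eviction set) and prove the one non-trivial point, that without a fixed victim $x$ the recorded elements are still mutually congruent. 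Your argument for this is sound: removing a single address $c$ lowers the occupancy of at most one congruence class, so a negative answer on $R\cup(S\setminus\{c\})$ can occur only when exactly one class holds exactly $\assoc+1$ elements of $R\cup S$ and $c$ belongs to it; from the first such event onward that is the unique class at the eviction threshold, so every subsequently recorded element lies in it, while earlier iterations (when several classes may exceed the threshold) only discard. One small caveat: the per-element miss counting of Test~\ref{fig:any_test} is not what resolves the ``several near-threshold classes'' scenario in the idealized model --- the paper introduces it for robustness to noise, and Test~\ref{fig:any_test_noisy} would serve equally here, since under a permutation-based policy any class with at least $\assoc+1$ elements already produces more than $\assoc$ misses in the second pass, making the test positive exactly when some class reaches $\assoc+1$ regardless of how misses are attributed to individual elements; the disambiguation you need is supplied entirely by your counting argument, not by the per-element measurement.
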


\begin{proposition}
Algorithm~\ref{alg:grouptesting} with Test~\ref{fig:any_test} reduces an eviction set to its minimal core in $\oo(N)$ memory accesses, where $N = \sizeof{S}$.
\end{proposition}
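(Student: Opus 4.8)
The plan is to mirror the proof of Proposition~\ref{prop:numacc} for the specific-address case, adapting it to the higher threshold induced by Test~\ref{fig:any_test}. First I would observe that, just as Test~\ref{fig:specific_test} is a threshold group test for congruence with $x$ with $\lowth=\assoc-1$ and $\upth=\assoc$, Test~\ref{fig:any_test} is a threshold group test for membership in the (unknown) largest congruence class of $S$, now with $\lowth=\assoc$ and $\upth=\assoc+1$: it answers positively exactly when some cache set receives more than $\assoc$ addresses, which under cyclic access causes all of them to miss on the second pass, and negatively when every set receives at most $\assoc$. Thus the relevant threshold is $\genassoc=\assoc+1$, and I can invoke Lemma~\ref{lem:numtests} with $\genassoc=\assoc+1$ in place of $\genassoc=\assoc$.

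For correctness, the one real change is that the $\mathit{split}$ routine must partition $S$ into $\genassoc+1=\assoc+2$ subsets (rather than $\assoc+1$), with the loop terminating once $\sizeof{S}=\assoc+1$. I would then rerun the counting argument of Appendix~\ref{apdx:pidgeonhole} at the new threshold: if the largest congruence class has $m\ge\assoc+1$ elements, then among $\assoc+2$ disjoint subsets at least one contains at most $\lfloor m/(\assoc+2)\rfloor$ of them, and removing it leaves at least $\assoc+1$, so $S\setminus T_j$ is still an eviction set and Test~\ref{fig:any_test} confirms this. This is precisely the step I expect to be the main obstacle, because it fails at the boundary case $m=\assoc+1$ if one keeps only $\assoc+1$ subsets: the $\assoc+1$ congruent addresses may land one per subset, so removing any subset destroys the eviction property and the inner loop finds no valid $i$. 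Verifying that $\assoc+2$ subsets restore the bound (indeed $\lfloor(\assoc+1)/(\assoc+2)\rfloor=0$) is the crux. The loop invariant that $S$ remains an eviction set then follows, and upon termination the surviving $\assoc+1$ addresses are necessarily mutually congruent, i.e. a minimal eviction set for an arbitrary address per Definition~\ref{def:evictset}.

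For the complexity bound I would count memory accesses exactly as in Proposition~\ref{prop:numacc}. Each iteration on a set of size $N$ performs the split and at most $\assoc+2$ group tests, for a cost of $\oo((\assoc+1)N)$ accesses, and recurses on a single subset of size $N\frac{\assoc+1}{\assoc+2}$, giving the recurrence $T(N)=T\!\left(N\frac{\assoc+1}{\assoc+2}\right)+(\assoc+1)N$ for $N>\assoc+1$, with $T(\assoc+1)=\assoc+1$. By the Master theorem~\cite{cormen5introduction} this solves to $T(N)\in\Theta(N)$; treating the associativity $\assoc$ as a constant yields the claimed $\oo(N)$ bound (with the hidden dependence $\oo(\assoc^2 N)$, exactly as in the specific-address case).
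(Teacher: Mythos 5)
Your proof is correct and follows essentially the same route as the paper, whose entire justification is the one-sentence remark that the bounds of Propositions~\ref{prop:naive} and~\ref{prop:numacc} carry over because Test~\ref{fig:specific_test} and Test~\ref{fig:any_test} are both linear in the size of the tested set. You are in fact more careful than the paper: your observation that the threshold rises to $\genassoc=\assoc+1$, so that $\mathit{split}$ must produce $\assoc+2$ subsets with termination at $\sizeof{S}=\assoc+1$ (otherwise the pigeonhole argument genuinely fails when exactly $\assoc+1$ congruent addresses land one per subset), fills a real gap that the paper's statement of Algorithm~\ref{alg:grouptesting} silently glosses over, and your modified recurrence still solves to $\Theta(N)$ as required.
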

The complexity bounds for computing eviction sets for an arbitrary address coincide with those in Proposition~\ref{prop:naive} and~\ref{prop:numacc} because Test~\ref{fig:specific_test} and Test~\ref{fig:any_test} are both linear in the size of the tested set.

\subsection{Computing Minimal Eviction Sets for {\em Many} Virtual Addresses}\label{ssec:evset4all}

We now discuss the case of finding eviction sets for a large number of cache sets. For this we assume a given pool $P$ of virtual addresses, and explain how to compute minimal eviction sets for all the eviction sets that are contained in $P$. For large enough $P$ the result can be a set of eviction sets for all virtual addresses.

The core idea is to use a large enough subset of $P$ and reduce it to a minimal eviction set $S$ for an arbitrary address, say $x$. Use $S$ to build a test {\sc Test}$((S\setminus\{x\})\cup\{y\},x)$ for individual addresses $y$ to be congruent with $x$. Use this test to scan $P$ and remove all elements that are congruent with $x$. Repeat the procedure until no more eviction sets are found in $P$. With a linear reduction using Algorithm~\ref{alg:grouptesting}, a linear scan, and a constant number of cache sets, this procedure requires $\mathcal{O}(\sizeof{P})$ memory accesses to identify all eviction sets in $P$.

Previous work~\cite{OrenSpy2015} proposes a similar approach based on the
quadratic baseline reduction. The authors leverage the fact that,
on earlier Intel CPUs, given two congruent physical addresses $x \simeq y$,
then $x + \Delta \simeq y + \Delta$, for any offset
$\Delta < 2^{\attackerbits}$. This implies that, given one eviction set for
each of the $2^{\indexbits-\attackerbits}$ page colors, one can immediately
obtain $2^{\attackerbits}-1$ others by adding appropriate offsets to each
address. Unfortunately, with unknown slicing functions this only holds with
probability $2^{-\slicebits}$, what increases the attacker's effort. Our
linear-time algorithm helps scaling to large numbers of eviction sets under
those conditions.

Another solution to the problem of finding many eviction sets has been proposed in~\cite{LiuLLC2015}. This solution differs from the two-step approach in that the algorithm first constructs a so-called {\em conflict set}, which is the union of all minimal eviction sets contained in $P$, before performing a split into the individual minimal eviction sets. The main advantage of using conflict sets is that, once a minimal eviction set is found, the conflict set need not be scanned for further congruent addresses.

\section{Evaluation}\label{sec:evaluation}

In this section we perform an evaluation of the algorithms for computing minimal eviction sets we have developed in Section~\ref{sec:algorithms}. The evaluation complements our theoretical analysis along two dimensions:
\paragraph*{Robustness} The theoretical analysis assumes that tests for eviction sets always return the correct answer, which results in provably correct reduction algorithms. In this section we analyze the robustness of our algorithms in practice. In particular, we study the influence of factors that are outside of our model, such as adaptive cache replacement policies and TLB activity. We identify conditions under which our algorithms are almost perfectly reliable, as well as conditions under which their reliability degrades. These insights can be the basis of principled countermeasures against, or paths forward for improving robustness of, algorithms for finding eviction sets.
\paragraph*{Execution time} The theoretical analysis captures the performance of our algorithms in terms of the number of memory accesses. As for the case of correctness, the real execution time is influenced by factors that are outside of our model, such as the total number of cache and TLB misses, or the implementation details. In our empirical analysis we show that the number of memory accesses is in fact a good predictor for the asymptotic real-time performance of our algorithms.

\subsection{Design of our Analysis}\label{subsec:design}

\paragraph*{Implementation}
We implemented the tests and algorithms described in Sections~\ref{subsec:test}~and~\ref{sec:algorithms} as a command line tool, which can  be parameterized to find minimal eviction sets on different platforms.
All of our experiments are performed using the tool. The source code is available at: \url{https://github.com/cgvwzq/evsets}.

\paragraph*{Analyzed Platforms}
We evaluate our algorithms on two different CPUs running Linux~4.9:
\begin{asparaenum}
\item Intel i5-6500 4 x 3.20\,GHz (Skylake family), 16\,GB of RAM, and a 6\,MB LLC with 8192 12-way cache sets. Our experiments indicate that only 10 bits are used as set index on this machine, we hence conclude that each core has 2 slices. Following our previous notation, i.e.: $\assoc_{\skylake}=12, \indexbits_{\skylake}=10, \slicebits_{\skylake}=3, \linebits_{\skylake}=6$.
\item Intel i7-4790 8 x 3.60GHz\,GHz (Haswell family), 8\,GB of RAM, and a 8\,MB LLC with 8192 16-way cache sets. This machine has 4 physical cores and 4 slices. Following our previous notation, i.e.: $\assoc_{\haswell}=16, \indexbits_{\haswell}=11, \slicebits_{\haswell}=2, \linebits_{\haswell}=6$.
\end{asparaenum}
We emphasize that all experiments run on machines with user operating systems (with a default window manager and background services), default kernel, and default BIOS settings.

\paragraph*{Selection of Initial Search Space}
We first allocate a big memory buffer as a pool of addresses from where we can
suitably chose the candidate sets (recall Section~\ref{subsec:probeviction}).
This choice is done based on the adversary's capabilities (i.e., $\attackerbits$),
for example, by collecting all addresses in the buffer using a stride of
$2^{\attackerbits + \linebits}$, and then randomly selecting $N$ of them. With
this method, we are able to simulate any amount of adversary control over the set index bits, i.e. any $\attackerbits$ with
$\attackerbits < \pagebits-\linebits$.

\paragraph*{Isolating and Mitigating Interferences}\label{ssec:isolation}
We identify ways to isolate two important sources of interference that affect the
reliability of our tests and hence the correctness of our algorithms:
\begin{asparaitem}
\item {\em Adaptive Replacement Policies:}
Both Skylake and Haswell employ mechanisms to adaptively switch between undocumented cache replacement policies.
Our experiments indicate that Skylake keeps a few {\em fixed} cache sets (for example, the cache set {\em zero}) that seem to behave as PLRU and match the assumptions of our model. Targeting such sets allows us to isolate the effect of adaptive and unknown replacement policies on the reliability of our algorithms.
\item {\em Translation Lookaside Buffers:}
Performing virtual memory translations during a test results in accesses to the TLB. An increased number of translations can lead to an increased number of TLB misses, which at the end trigger page walks. These page walks result in {\em implicit} memory accesses that may evict the target address from the cache, even though the set under test is {\em not} an eviction set, i.e. it introduces a false positive. TLB misses also introduce a noticeable delay on time measurements, what has been recently discussed in a concurrent work~\cite{CachePortableGenkin2018}. We isolate these effects by performing experiments for pages of $4$KB on huge pages of $2$MB, but under the assumption that, as for $4$KB pages, only $\attackerbits=6$ bits of the set index are under attacker control.
\end{asparaitem}

We further rely on common techniques from the literature to mitigate the influence of other sources of interference:
\begin{asparaitem}
\item For reducing the effect of {\em hardware prefetching} we use a linked list to represent eviction sets, where each element is a pointer to the next address. This ensure that all memory accesses loads are executed in-order. We further randomize the order of elements.
\item For reducing the effect of {\em jitter}, we perform several time measurements per test and compare their average value with a threshold. In our experiments, $10-50$ measurements are sufficient to reduce the interference of context switches and other spurious events. More noisy environments (e.g. a web browser) may require larger numbers.
\end{asparaitem}

\begin{figure*}[ht]
    \includegraphics[width=0.9\linewidth]{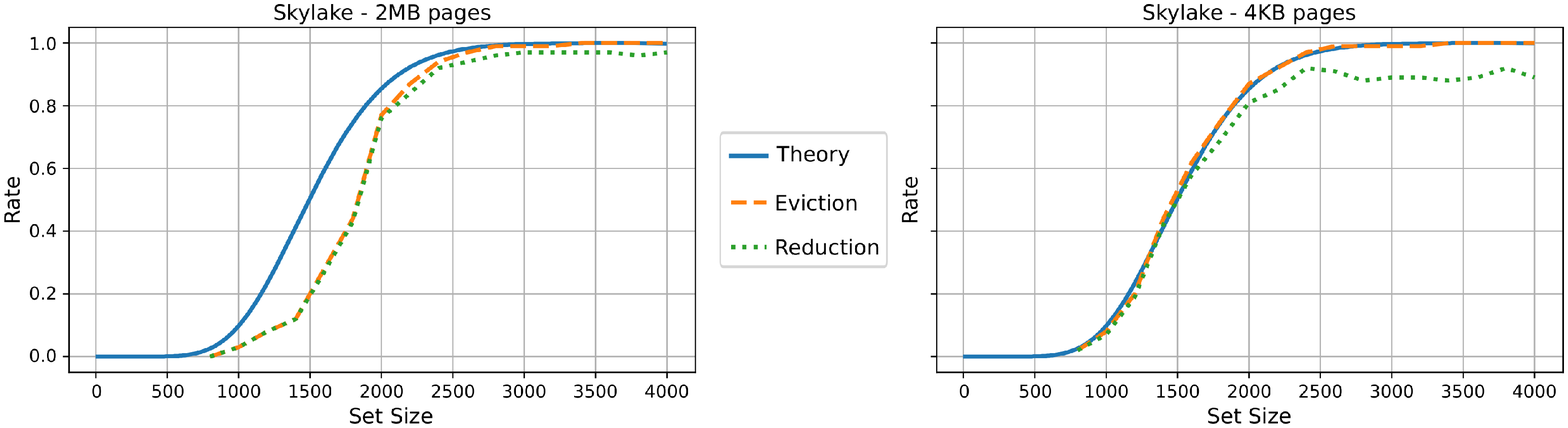}
    \caption{Skylake: Eviction for specific address $x$ on {\bf cache set {\em zero}}, compared to our binomial model. Each point is the average of $1000$ reductions for sets of $N$ randomly chosen addresses.}\label{fig:test_binomial_skylake}
\end{figure*}
\begin{figure}[ht]
    \begin{subfigure}{\columnwidth}
        \includegraphics[width=\textwidth]{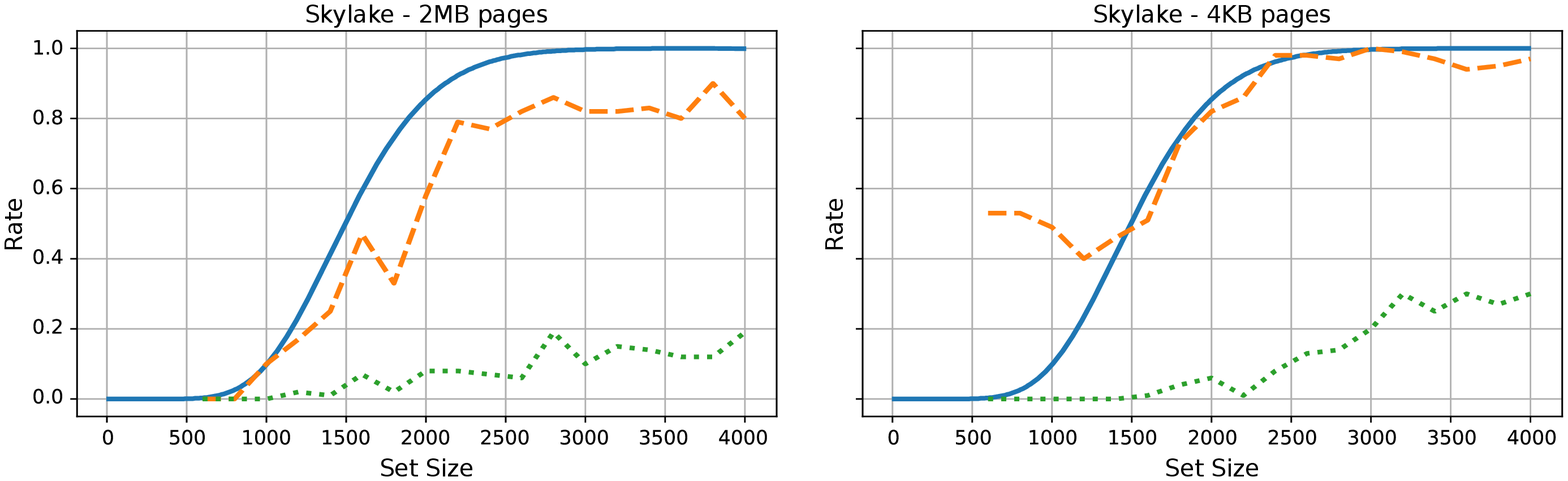}
        \caption{Eviction and reduction rates for an specific address $x$ targeting {\bf cache set 10}, compared to our binomial model.}\label{fig:test_binomial_skylake_set10}
    \end{subfigure}
    \begin{subfigure}{\columnwidth}
        \includegraphics[width=\textwidth]{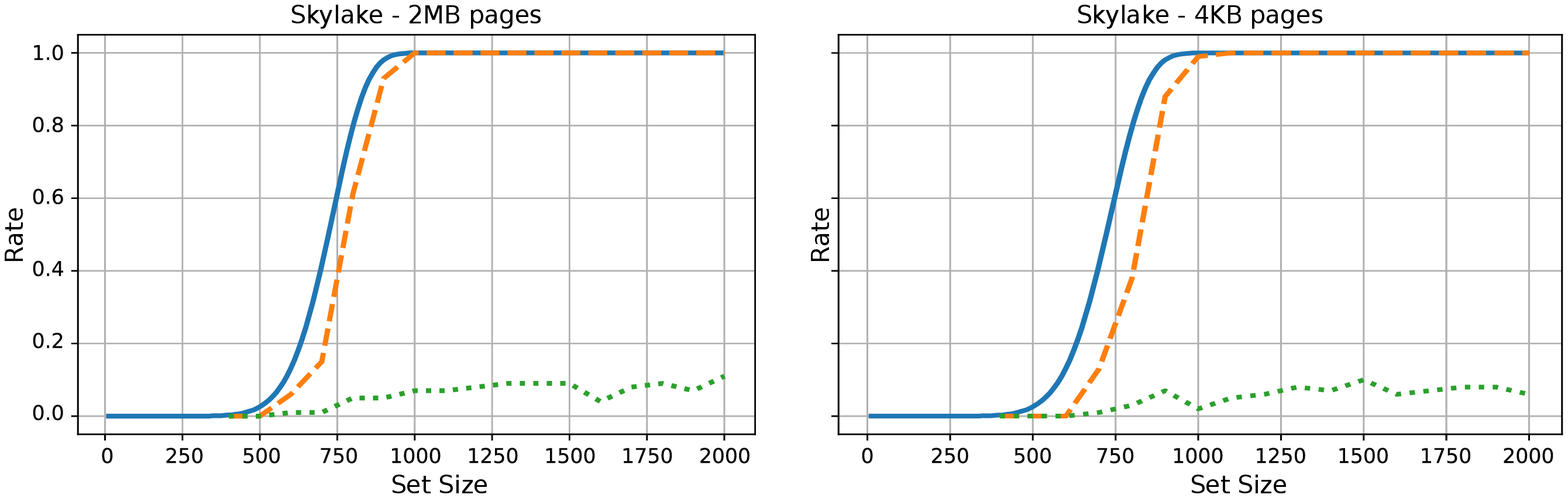}
        \caption{Eviction and reduction rates for an arbitrary address, compared to our multinomial model.}\label{fig:test_multinomial_skylake}
    \end{subfigure}
    \caption{Experiments on Skylake. Each point is the average of $100$ reductions for sets of $N$ randomly chosen addresses.}
\end{figure}
\begin{figure}[ht]
    \begin{subfigure}{\columnwidth}
        \includegraphics[width=\textwidth]{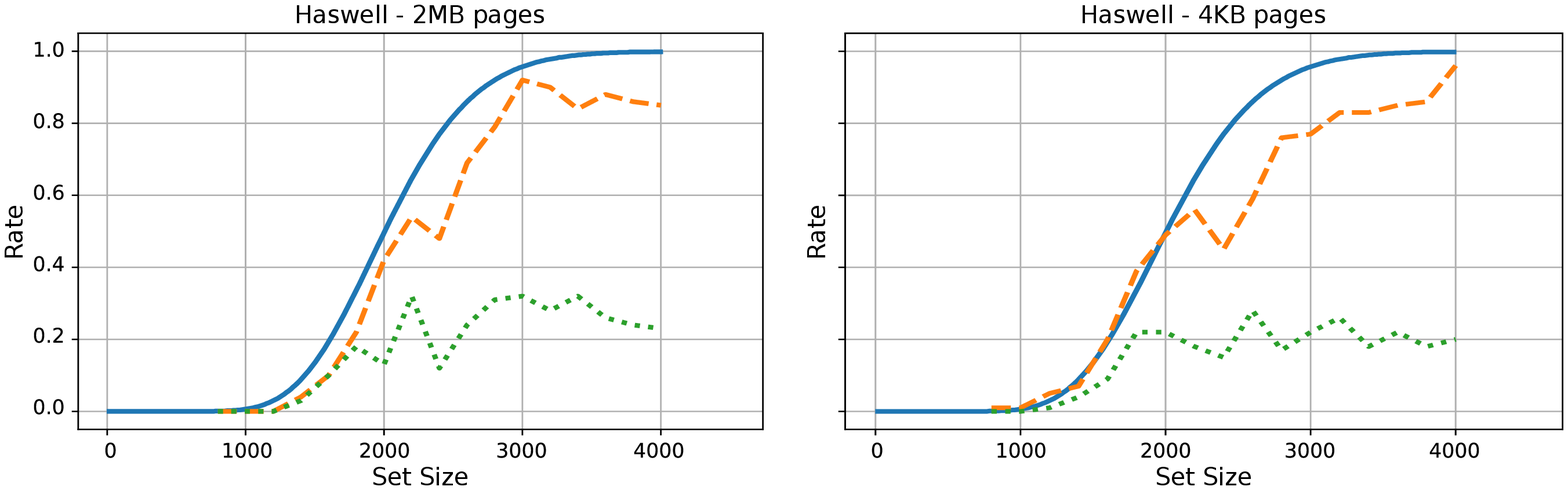}
        \caption{Eviction and reduction rates for an specific address $x$, compared to our binomial model.}\label{fig:test_binomial_haswell}
    \end{subfigure}
    \begin{subfigure}{\columnwidth}
        \includegraphics[width=\textwidth]{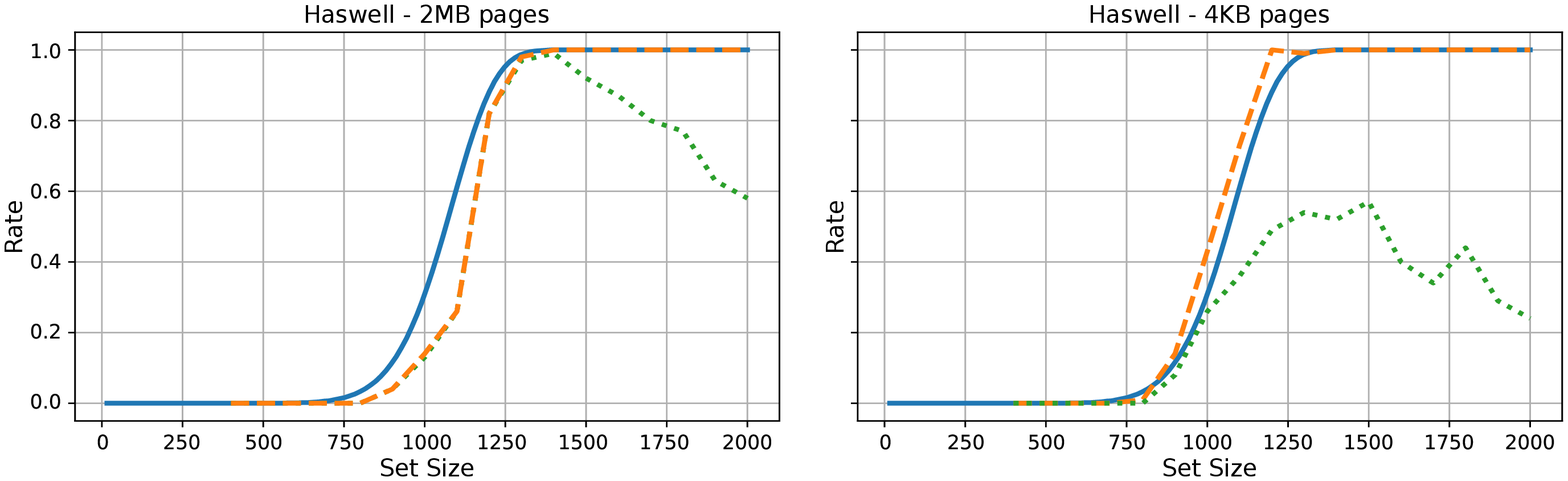}
        \caption{Eviction and reduction rates for an arbitrary address, compared to our multinomial model.}\label{fig:test_multinomial_haswell}
    \end{subfigure}
    \caption{Experiments on Haswell. Each point is the average of $100$ reductions for sets of $N$ randomly chosen addresses.}
\end{figure}

\subsection{Evaluating Robustness}\label{ssec:criteria}

We rely on two indicators for the robustness of our tests and reduction algorithms:

\begin{compactitem}
\item The {\em eviction rate}, which is the relative frequency of our tests returning true on randomly selected sets of fixed size.
\item The {\em reduction rate}, which we define as the relative frequency of our reduction succeeding to reduce randomly selected sets of fixed size to a minimal eviction set.
\end{compactitem}
Here, a reduction is successful if the elements it returns are congruent, i.e., they coincide on the set bits and on the slice bits. For this check we rely on the reverse engineered slice function for Intel CPUs~\cite{Maurice2015RAID}.

With perfect tests (and hence correct algorithms), both the eviction rate and the reduction rate should coincide with the theoretical prediction given in Section~\ref{sec:eviction}. Our analysis hence focuses on deviations of the eviction and reduction rate from these predictions.

\paragraph*{Experimental Results}

The experimental results for eviction and reduction for a specific address $x$ are given in Figures~\ref{fig:test_binomial_skylake} and~\ref{fig:test_binomial_skylake_set10} (for Skylake), and Figure~\ref{fig:test_binomial_haswell} (Haswell). The results for arbitrary addresses are given in Figures~\ref{fig:test_multinomial_skylake} and~\ref{fig:test_multinomial_haswell} (for Skylake and Haswell, respectively). We highlight the following findings:

\begin{asparaitem}

\item {\em Analysis under idealized conditions}
We first analyze our test and reduction algorithms under idealized conditions, where we use the techniques described in Section~\ref{ssec:isolation} to mitigate the effect of TLBs, complex replacement policies, prefetching, and jitter.
Figure~\ref{fig:test_binomial_skylake} illustrates that, under these conditions, eviction and reduction rates (Test~\ref{fig:specific_test} and~Algorithm~\ref{alg:grouptesting}) closely match. Moreover, eviction and reduction rates closely match the theoretical prediction for small pages.

For huge pages, however, eviction and reduction rates remain below the theoretical prediction, see Figure~\ref{fig:test_binomial_skylake}.  We attribute this to the fact that, using explicit allocations (see Appendix~\ref{apdx:hugepages}), huge pages are chosen from a pre-allocated pool of physical pages that usually resides in a fixed {\em zone}. This limits the uniformity of the more significant bits and deviates from our uniform modeling.

\item {\em Effect of the cache replacement policies}.
  Our experimental results show that the eviction and reduction rates decrease significantly on Haswell (Figure~\ref{fig:test_binomial_haswell}), and on Skylake when targeting a cache set (Figure~\ref{fig:test_binomial_skylake_set10}) different from zero. The effect is also visible in the evaluation of algorithms for finding an arbitrary eviction set (see Figures~\ref{fig:test_multinomial_skylake} and~\ref{fig:test_multinomial_haswell}).

The decrease seems to be caused by two factors: the replacement policy of the targeted cache sets does not match our models; and the targeted cache set are influenced by accesses to other sets in the cache. We provide further evidence of this effect in Section~\ref{sec:robustness}.

\item {\em Effect of TLB thrashing}.
Virtual memory translations are more frequent with small pages than with huge pages, which shows in our experiments: The eviction rate lies above the theoretical prediction, in particular for large sets, which shows the existence of false positives. In contrast, the reduction rate falls off. This is because false positives in tests cause the reduction to select sets that are {\em not} eviction sets, which leads to failure further down the path.

 The effect is clearly visible in Figure~\ref{fig:test_binomial_skylake}, where we  compare the results on small pages with those on huge pages for cache set zero on Skylake. We observe that the reduction rate on small pages declines for $N > 1500$, which, as Appendix~\ref{apdx:tlbinfo} shows, coincides with the TLB capacity of Skylake of $1536$ entries.

  The effect is also visible in Figure~\ref{fig:test_multinomial_haswell} , where we attribute the strong decline of the reduction rate after $N>1000$ (Haswell's TLB  capacity is $1024$ entries) to implicit memory accesses having a greater chance to be an eviction set for Haswell's adaptive replacement policy. In the rest of figures the effect is overlaid with interferences of the replacement policy. However, Figure~\ref{fig:test_multinomial_skylake} shows that with large TLBs, and for most reasonable values of $N$, the effect of TLB thrashing is negligible.
\end{asparaitem}

\subsection{Evaluating Performance}\label{ssec:performance}

We evaluate the performance of our novel reduction algorithm and compare it to that of the baseline from the literature. For this, we measure the average time required to reduce eviction sets of different sizes to their minimal core.
We first focus on idealized conditions that closely match the assumptions of the theoretical analysis in Section~\ref{sec:algorithms}.

To put the performance of the reduction in context, we also evaluate the effort that is required for finding an initial eviction set to reduce. For this, we consider attackers with different capabilities to control the set index bits, based on huge pages ($\attackerbits=10$), 4\,KB pages ($\attackerbits=6$), and with no control over the set index bits ($\attackerbits=0$).

Together, our evaluation gives an account of how the performance gains of our novel reduction algorithm affect the overall effort of computing minimal eviction sets.

\paragraph*{Experimental Results}

The results of the evaluation of the reduction for a specific address on Skylake are given in Figure~\ref{fig:reduction_times}. We focus on cache set {\em zero} to mitigate the effect of the replacement policy, and we mitigate the influence of TLBs and prefetching as described in Section~\ref{subsec:design}.\footnote{In the {\em limit case} the stride of $64$B makes inferences by prefetching  prohibitive even with a randomized order, which is why we disable hardware prefetchers using \texttt{wrmsr -a 0x1a4 15}.}

Each data point is based on the average execution time of $10$ successful reductions. The sizes of the initial sets (x-axis) are chosen to depict the range where finding an initial eviction set does not require picking a too large number of candidate sets (depicted by the green bars). For a more systematic choice of the initial set size see the discussion below.

\begin{figure}[ht]
\includegraphics[width=\columnwidth]{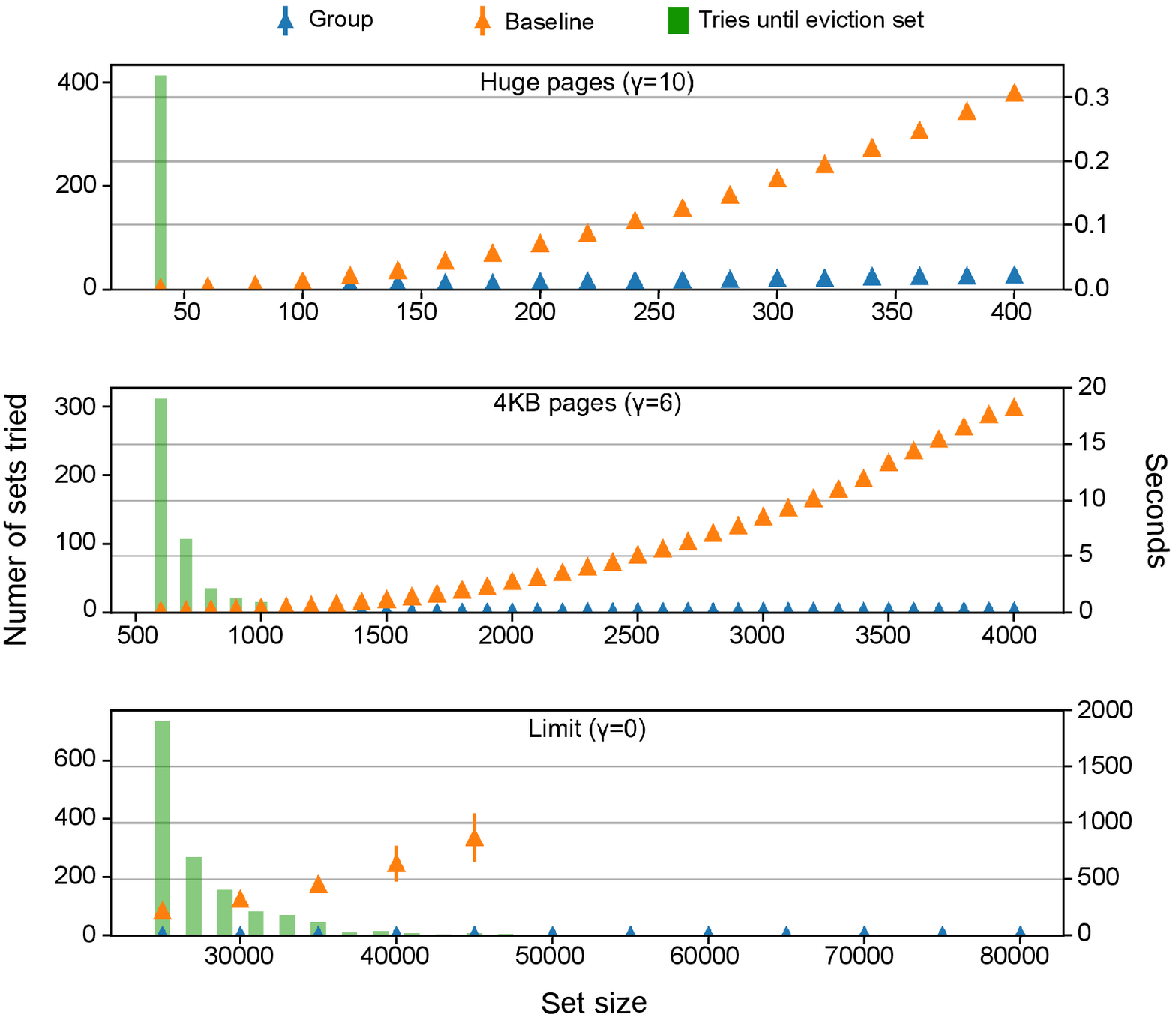}
\caption{The vertical green bars (left axis) depict the average number of times one needs to pick a set of addresses for finding an eviction set. Triangles (right axis) show time in seconds: {\em blue} depicts the average execution time of group test reductions; {\em orange} depicts the average execution time of baseline reductions. Different plots illustrate attackers with huge pages, $4$KB pages, and w/o any control over the set index bits, respectively.}
\label{fig:reduction_times}
\end{figure}

We highlight the following observations:
\begin{asparaitem}
\item The {\em slope} of the orange curve clearly illustrates the quadratic growth of execution time of the naive reduction, whereas the blue curve shows the linear growth of our novel algorithms. The {\em absolute values} account for constant factors such as the $50$ time measurements per test, and the overhead due to metrics collection.
\item For large set sizes, our novel reduction clearly outperforms the quadratic baseline. For example, for sets of size $3000$, we already observe a performance improvement of a factor of $10$, which shows a {\em clear practical advantage}. For small set sizes, this practical advantage seems less relevant. For such sizes, however, the number of repetitions required until find a real eviction set grows, as illustrated by the green bars. For the total cost of finding an eviction set, both effects need to be considered in combination.
\end{asparaitem}

\paragraph*{Optimal Choice of the Initial Set Size}

For evaluating the cost of first identifying and then reducing eviction sets, we rely on an expression for the overall number of memory accesses required for finding a minimal eviction set. This expression is the sum of the expected number $\frac{N}{p(N)}$ of memory accesses for finding an eviction set, see Section~\ref{subsec:probeviction}, and of the memory accesses for the respective reductions $N^2$ and $N$, see Propositions~\ref{prop:naive} and~\ref{prop:numacc}.
Based on this expression, we compute the optimal set sizes (from an attacker's perspective) for the linear and the quadratic reductions.  We use these sizes as an approximation of the optimal use of each reduction algorithm in the overall pipeline, and we evaluate their execution time on sets of this size.

Table~\ref{tab:optimal_sizes} shows the comparison of the linear and the quadratic reductions on sets of optimal size for three different attackers: with huge pages, with $4$KB pages, and in the limit.

\begin{table}[h]
\begin{tabular}{c | c | c | c | c | c |}
\multicolumn{2}{}{} & \multicolumn{2}{|c|}{Baseline} & \multicolumn{2}{|c|}{Group Testing} \\
\hline
Attacker & $P(C)$ & $N$ & Time & $N$ & Time \\
\hline
HP ($\attackerbits=10$) & $2^{-3}$ & $62$ & $0.005$s & $62$ & $0.004$s \\
$4$KB ($\attackerbits=6$) & $2^{-7}$ & $662$ & $0.179$s & $862$ & $0.023$s \\
Limit ($\attackerbits=0$) & $2^{-13}$ & $26650$ & $218.651$s & $53300$ & $1.814$s \\
\hline
\end{tabular}
\caption{$N$ shows the optimal set sizes for different attackers ($\attackerbits$ bits) on Skylake ($\assoc=12$) using $50$ time measurements per test. Time shows the average execution time of our implementations of Algorithm~\ref{alg:naivetesting} (baseline) and Algorithm~\ref{alg:grouptesting} (group testing) under ideal conditions.}
\label{tab:optimal_sizes}
\end{table}

We highlight the following observations:
\begin{asparaitem}
\item For huge pages, computing eviction sets is cheap, and the linear-time reduction does not lead to a practical advantage.
\item For small pages, the linear-time reduction improves the cost of computing eviction sets by a factor of more than 7. This is a significant advantage in practice, as it can make attacks more stealthy and robust against timing constraints.
\item For the limit case, the linear-time reduction improves over the quadratic baseline by more than two orders of magnitude.
\end{asparaitem}

\subsection{Performance in Practice}

In this section we give examples of the performance benefits of our reduction algorithms in real-world scenarios, i.e., in the presence of TLB noise and adaptive replacement policies.

We implement two heuristics to counter the resulting sub-optimal reduction rates (see~Section\ref{ssec:isolation}): {\em repeat-until-success}, where we pick a new set and start over after a failed reduction; and {\em backtracking}, where at each level of the computation tree we store the elements that are discarded, and, in case of error, go back to a parent node on which the test succeeded to continue the computation from there. For more details we refer to our open-source implementation.

For comparing the performance of the reduction algorithms in the context of these heuristics, we follow the literature and focus on initial set sizes that guarantee that the initial set is an eviction set with high probability. This is because a real-world attacker is likely to forgo the complications of repeatedly sampling and directly pick a large enough initial set.

The following examples provide average execution times (over $100$ samples) for different attackers on randomly selected target cache sets. Skylake ($\assoc=12$) using $10$ time measurements per test.
\begin{asparaitem}
\item For finding eviction sets with {\em huge pages}, previous work~\cite{MauriceHello2017} suggests an initial set size of $N=192$ which, according to our binomial model (see Section~\ref{subsec:probeviction}), yields a probability of sets to be evicting close to $1$.
For this size, the baseline reduction takes $0.014$ seconds, while the group-testing reduction takes $0.003$ seconds, i.e. our algorithm improves the baseline by a factor of $5$.
\item For finding minimal eviction sets with {\em $4$KB pages}, previous work~\cite{OrenSpy2015} suggests an initial set size of $N = 8192$, which amounts to the size of LLC times the number of slices. We choose an initial set size of $N = 3420$ for our comparison, which according to our model provides a probability of being an eviction set close to $1$.
For this $N$, the baseline reduction takes $5.060$ seconds, while the group-testing reduction takes $0.245$ seconds, i.e. our algorithm improves the baseline by a factor of $20$.
Finding {\em all} minimal eviction sets (for a fixed offset) within this buffer\footnote{We empirically observe that on Skylake, this size is sufficient to contain eviction sets for most of the $128$ different cache sets for a fixed offset.} requires more than $100$ seconds with the baseline algorithm. With group testing, the same process takes only $9.339$ seconds, i.e. it improves by a factor of $10$.
\end{asparaitem}

\subsection{Summary}
In summary, our experiments show that our algorithms improve the time required for computing minimal eviction sets by factors of 5-20 in practical scenarios. Moreover, they show that finding minimal eviction sets from virtual (or sandboxed) memory space is fast {\em even without any control} over the slice or set index bits, rendering countermeasures based on masking these bits futile.

\section{A Closer Look at the Effect of Modern Cache Replacement Policies}\label{sec:robustness}

There are several features of modern microarchitectures that are not captured in our model and that can affect the effectiveness of our algorithms, such as adaptive and randomized replacement policies, TLBs, prefetching, etc. The evaluation of Section~\ref{sec:evaluation} shows that the influence of prefetching can be partially mitigated by an adversary, and that the influence of TLBs is not a limiting factor in practice. The role of the cache replacement policy is less clear.

\begin{figure*}[ht]
\includegraphics[width=0.9\linewidth]{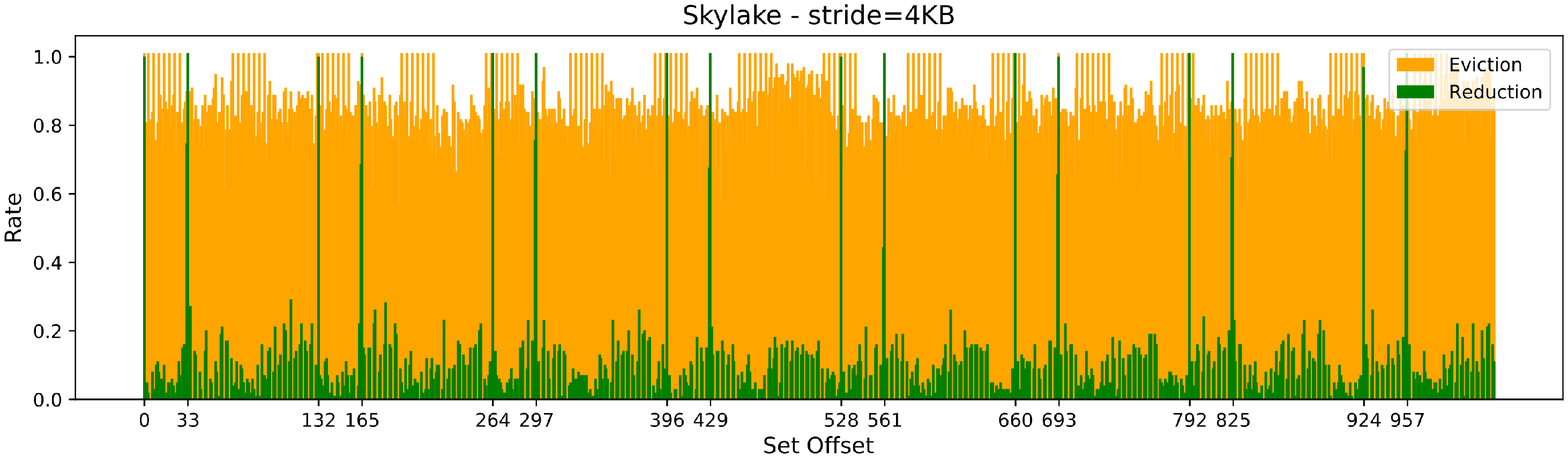}
\caption{Skylake's eviction and reduction rates per set index. With a stride of $4$KB and a total of $4000$ addresses (most of them non-congruent). The number of sets in-between two leaders is either 32 or 98. We rely on huge pages to precisely control the target's set index bits.}\label{fig:cachesets_skylake}
\end{figure*}

In this section, we take a closer look at the role of modern cache replacement policies in computing minimal eviction sets. As discussed in Section~\ref{sec:background}, post Sandy Bridge architectures boast replacement policies with features such as adaptivity or thrashing-resistance. With such features, accessing a set of $\assoc$ addresses that are congruent with $\setclass{x}$ is neither necessary nor sufficient for evicting $x$, which introduces a two-sided error (false positives and false negatives) in our tests for congruence. We first explain the key mechanisms that lead to this error, before we experimentally analyze its influence on Skylake and Haswell.

\subsection{Adaptive Replacement Policies}\label{ssec:replacement}
Adaptive cache replacement policies~\cite{Qureshi2006} dynamically select the replacement policy depending on which one is likely to be more effective on a specific load. For this, they rely on so-called {\em leader sets} that implement different policies. A counter keeps track of the misses incurred on the leaders and adapts the replacement policy of the {\em follower sets} depending on which leader is more effective at the moment. There are different  ways for selecting the leaders: a {\em static} policy in which the leader sets are fixed; and a {\em rand-runtime} policy that randomly selects different leaders every few millions instructions.

A previous study indicates that the replacement mechanism used in Ivy Bridge is indeed adaptive, with static leader sets~\cite{ReplacementPolicy2013}. To the best of our knowledge, there is no detailed study of replacement mechanisms on more recent generations of Intel processors such as Haswell, Broadwell, or Skylake, but there are mentions of high-frequency policy switches on Haswell and Broadwell CPUs as an obstacle for prime+probe attacks~\cite{OrenSpy2015}.

We perform different experiments to shed more light on the implementations of adaptivity in Skylake and Haswell, and on their influence on computing minimal eviction sets. To this end, we track eviction and reduction rates (see Section~\ref{sec:evaluation}) for each of the set indexes individually
\begin{enumerate}
\item on arbitrary eviction sets
\item on eviction sets in which all addresses are {\em partially} congruent.
\end{enumerate}
In the second case, the reduction uses only addresses belonging to a single cache set per slice. Assuming independence of cache sets across slice, a comparison with the first case allows us to identify the influence across cache sets.  For both experiments we rely on huge pages in order to precisely control the targeted cache set and reduce the effect of the TLB, see Section~\ref{ssec:isolation}.

\subsection{Evaluating the Effect of Adaptivity}
The results for reducing arbitrary eviction sets on Skylake are given in Figure~\ref{fig:cachesets_skylake}, the results for Haswell are given in Figure~\ref{fig:cachesets_haswell}. We focus on initial eviction sets of size $N = 4000$ (but observe similar results for other set sizes). We highlight the following findings:

\begin{figure}[ht]
\includegraphics[width=\linewidth]{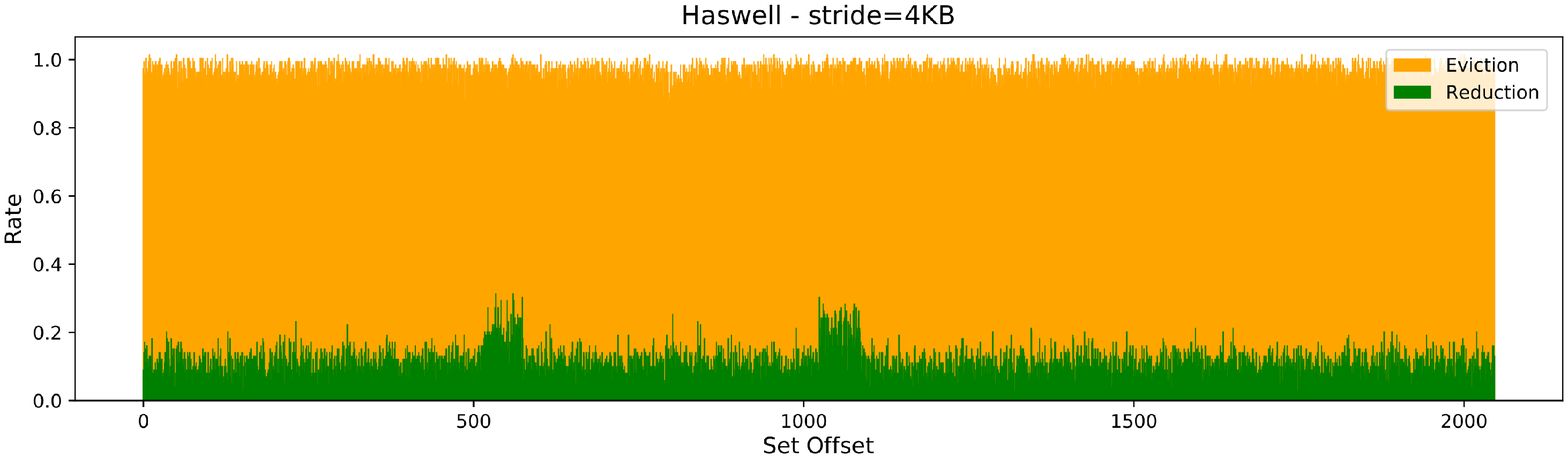}
\caption{Haswell's eviction and reduction rates per set index. With a stride of $4$KB and a total of $4000$ addresses (most of them non-congruent).}\label{fig:cachesets_haswell}
\end{figure}

\begin{figure}[ht]
\begin{subfigure}{\columnwidth}
    \includegraphics[width=\linewidth]{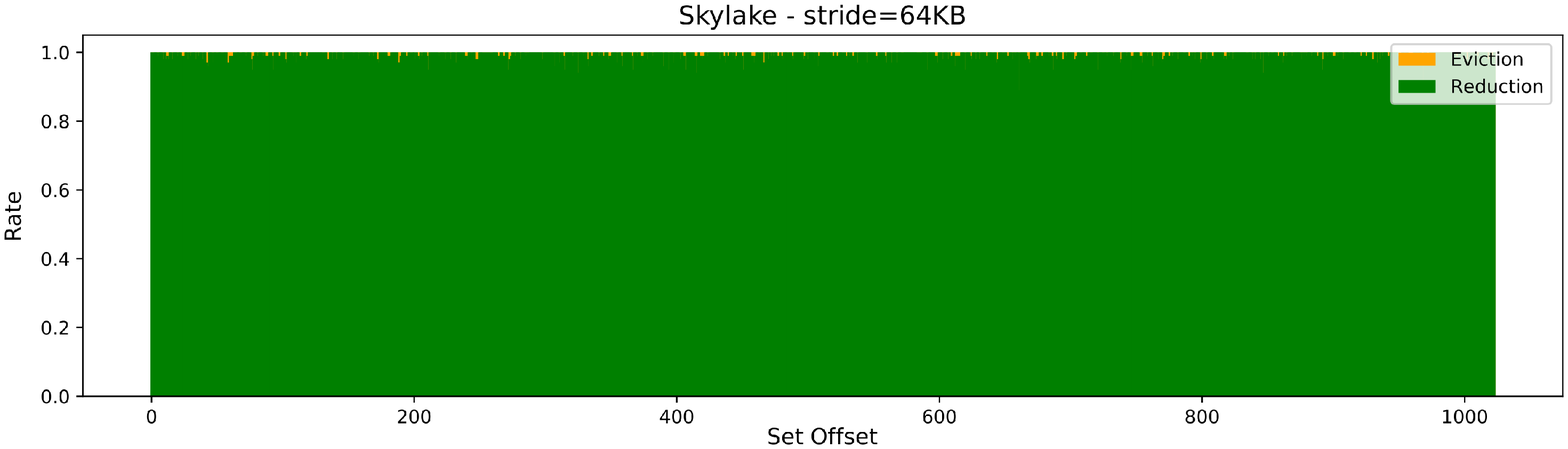}
    \caption{Skylake's eviction and reduction rates per set index, based on a stride of $64$KB (only partially congruent addresses).}
\end{subfigure}
\begin{subfigure}{\columnwidth}
    \includegraphics[width=\linewidth]{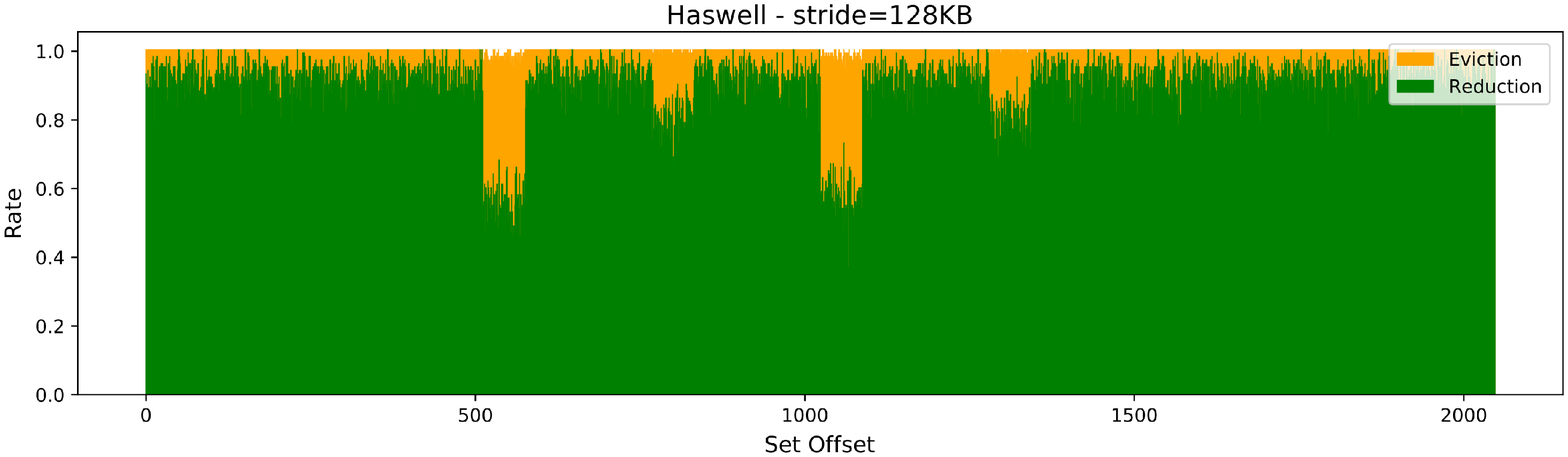}
    \caption{Haswell's eviction rate and reduction rate per set index, based on a stride of $128$KB (only partially congruent addresses).}
\end{subfigure}
\caption{Eviction rate and reduction rate per set index for initial sets of $4000$ partially congruent addresses. }
\label{fig:cachesets_work}
\end{figure}

\begin{figure*}[ht]
\includegraphics[width=0.9\linewidth]{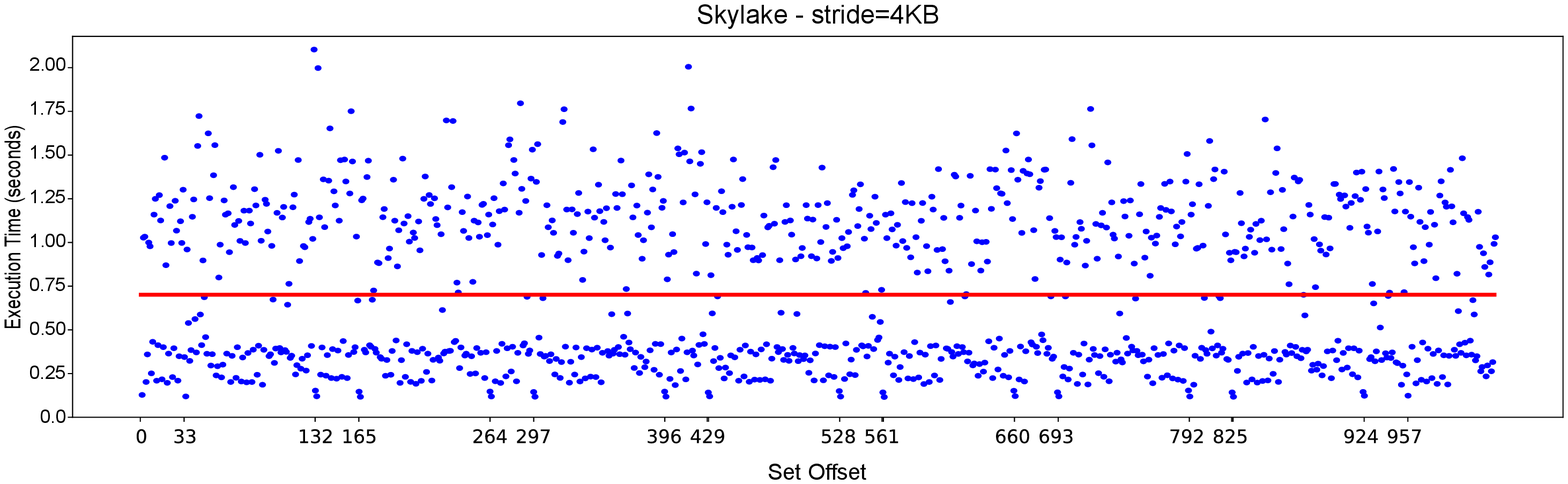}
\caption{Skylake's total execution time per set index using backtracking and repeat-until-success. Average time over 100 samples, all of them successful. Stride of $4$KB (simulate adversary) and initial set of $4000$ addresses (most of them non-congruent). The lowest execution times (below $0.12$s), correspond to sets with higher reduction rate. Horizontal line shows the overall average execution time.
}\label{fig:performance_skylake}
\end{figure*}

\begin{asparaitem}
\item Skylake seems to implement an adaptive replacement mechanism with static leader sets, much like Ivy Bridge. In particular, we identify a subset of 16 (out of 1024 per slice) sets where the reduction rate is consistently above $95$\% and where tests consistently evict the target address according to our model (i.e. without false positives and false negatives). On the follower sets the reduction rate quickly falls off despite a high eviction rate, indicating that the test can produce both false positives and false negatives.
\item In contrast to Skylake, on Haswell we are not able to identify any leader sets with consistently high reduction rates, which suggests a dynamic leader selection policy is in place.
\end{asparaitem}

The results of our reductions on partially congruent eviction sets on Haswell and Skylake are given in Figure~\ref{fig:cachesets_work}. They show that eviction and reduction rates are close to the predicted optimum. This improves over the reduction rate in Figure~\ref{fig:cachesets_skylake} and \ref{fig:cachesets_haswell}, and indicates a strong interference on the eviction test when accessing neighboring cache sets. In particular, we observe that the robustness of the reduction increases with the proportion of partially congruent addresses in the initial eviction set.

Finally, Figure~\ref{fig:performance_skylake} depicts the average execution time, including the overhead of the backtracking heuristic, of finding a minimal eviction set for each cache set index. A lower reduction rate implies a higher number of errors, and hence more backtracking steps and a longer execution time. This effect is visible when comparing with Figure~\ref{fig:cachesets_skylake}: cache sets with the highest reduction rates have the lowest execution times.\footnote{The plot also shows two different clusters of execution times, for which we currently lack a satisfying explanation.}

\subsection{Future Work}\label{ssec:future}
A more detailed analysis of the algorithmic implications of adaptive cache replacement policies is out of the scope of this paper. However, we briefly describe some ideas for future work:
\begin{asparaitem}
\item Controlling the policy. A better understanding of adaptivity mechanisms could be exploited to augment the number of followers that behave deterministically, and hence facilitate the reduction. For instance, once we find eviction sets for a leader set on Skylake, a parallel thread could produce hits on that set (or misses on another leader), ensuring that it keeps the lead.
\item Group testing. Work on {\em noisy} group testing~\cite{NoisyGroupTesting2017}, or threshold group testing with gap~\cite{Damaschke2006}, can provide useful tools for dealing with the uncertainty about the exact behavior of modern microarchitectures.
\end{asparaitem}

\section{Related Work}\label{sec:related}

Computing minimal, or at least small, eviction sets provides an
essential primitive for removing or placing arbitrary data in the
cache, which is essential for LLC cache attacks (Prime+Probe~\cite{LiuLLC2015},
Evict+Reload~\cite{GrussTemplate2015}, etc.), for DRAM fault attacks (such as
Rowhammer~\cite{KimDRAM2014,SeabornRH2015}, which break the
separation between security domains), for memory-deduplication attacks
(such as VUSION~\cite{OliverioVUSION2017}), as well as for the recent
Meltdown~\cite{meltdown18} and Spectre~\cite{spectre18} attacks
(which use the cache to leak data across boundaries and to increase
the number of speculatively executed instructions).

Gruss et al.~\cite{GrussRow2016} already identified {\em dynamic}
and {\em static} approaches for finding eviction sets. The dynamic
method uses timing measurements to identify colliding addresses
without any knowledge about the LLC slicing hash function or the physical
addresses; whereas the static method use the reverse engineered hash
and (partial) information about the physical addresses to compute
eviction sets.
%
%
In practice, most attacks~\cite{MauriceHello2017} rely in a hybrid
approach, producing a partially congruent set of addresses with static
methods, and pruning or reducing the results with a dynamic method
(mostly variants of Algorithm~\ref{alg:naivetesting}). We review some
of the most relevant approaches:

\paragraph*{Fully static, without slicing} In CPUs without slicing
(such as ARM) it is possible to find eviction sets directly using the
information from the pagemap interface. Lipp et al.~\cite{LippARM2016}
explore how to perform Prime+Probe, Evict+Reload, and other cross-core
cache attacks on ARM.
Fortunately, Google patched Android in March 2016~\footnote{Android patch:
  \url{https://source.android.com/security/bulletin/2016-03-01}}, and
now it requires {\em root} privileges to disclose physical addresses,
difficulting the task of finding eviction sets.

\paragraph*{Static/Dynamic with huge pages}
Liu et al.~\cite{LiuLLC2015} and Irazoqui et al.~\cite{Irazoqui2015},
in their seminal work on attacks against LLC,
rely on $2$MB huge pages to circumvent the problem of mapping virtual
addresses into cache sets.
They are the first to propose this method.

Gruss et al.~\cite{GrussRow2016} present the first rowhammer attack
from JavaScript. To achieve this, they build eviction sets thanks to
$2$MB huge pages (provided by some Linux distributions with {\em transparent
huge pages} support, see Appendix~\ref{apdx:hugepages}).
%

On the other hand, more sophisticated cache attacks from Intel's SGX~\cite{SchwarzSGX2017}
rely on the predictable physical allocation of large arrays within SGX enclaves,
and on the information extracted from another side-channel in DRAM row's buffers.

\paragraph*{Sandboxed environments without huge pages}
Oren et al.~\cite{OrenSpy2015} present an extension to Liu~et~al.'s work,
carrying out the first cache attack from JavaScript, where regular
$4$KB pages are used and pointers are not directly available.
It exploits the knowledge of browser's page aligned allocation for
large buffers to construct an initial set with identical page
offset bits. Then they leverage the clever technique described in
Section~\ref{ssec:evset4all} for further accelerating the process of
finding other eviction sets.

Dedup Est Machina~\cite{BosmanDedup2016} also implements a JavaScript
rowhammer attack, but this time targeting Microsoft Edge on Windows
10. Interestingly, they can not rely on large pages, since Microsoft
Edge does not explicitly request them. However, they discover that the
Windows kernel dispenses pages for large allocations from different
physical memory pools that frequently belong to the same cache sets.
Thereby, they are able to efficiently find eviction sets (not minimal) by
accessing several addresses that are $128$KB apart (and often land in
the same cache set).

Horn's~\cite{SpectreP0} breaks virtual machine isolation using a heuristic
to find small eviction sets by iterating over
Test~\ref{fig:any_test} several times, and discarding all elements
that are always {\em hot} (i.e. always produce cache hits). While this
heuristic performs extremely well in practice, its asymptotic cost is quadratic
on the set size.

Finally, a more recent work on cache attacks from portable
code~\cite{CachePortableGenkin2018} (PNaCl and WebAssembly) discusses
the problem of finding eviction sets on regular $4$KB pages and how to
partially deal with TLB thrashing.

In contrast to these approaches, our work is the first to consider
adversaries with less than $12$ bits of control over the physical
addresses, it formalizes the problem of finding eviction sets, and
provides new techniques that might \emph{enable purely dynamic approaches}.

 \paragraph*{Reverse engineering of slicing functions}
Modern CPUs~\footnote{According to Intel's Architecture Reference
  Manual~\cite{ia2018} (see {\em 2.4.5.3 Ring Interconnect and Last
    Level Cache}), Sandy Bridge is the first generation with slicing.}
with LLC slicing use proprietary hash functions
for distributing blocks into slices, which lead to attempts to reverse
engineer them.
These works are based on: 1) allocating and identifying sets of
colliding
addresses~\cite{SeabornSlice2015,Maurice2015RAID};
and 2) reconstructing the slice function using the hamming
distance~\cite{Hund2013}, or solving systems of
equations~\cite{IrazoquiSlices2015}, between these addresses.  Even
though we now know the slice hash function for several microarchitectures,
and Maurice et al.~\cite{MauriceHello2017} leverage it to speed up the finding of eviction
sets with huge pages, we believe that its use on real attacks is hindered
by constrained environments with scarce information about the physical addresses.

\paragraph*{Thrashing/scanning resistant replacement policies}
Modern replacement policies such as
insertion policies~\cite{QureshiJPSE07} or DRRIP~\cite{Jaleel2010},
are known to perform better than PLRU against workloads causing
scanning or thrashing. However, they also make eviction less reliable,
and fall outside our current models (see Section~\ref{sec:eviction}).
Howg~\cite{ReplacementPolicy2013} proposes a {\em dual pointer chasing}
to mitigate these effects; and Gruss~et~al.~\cite{GrussRow2016} generalize
the approach with {\em eviction strategies}, which are access patterns
over eviction sets that increase the chance of eviction under some unknown
modern policies. Both approaches are orthogonal to our in that they already
assume the possession of eviction sets.

\paragraph*{Set index randomization}
Concurrent work proposes some new designs for randomized caches~\cite{DBLPTrillaHAC18,Qureshi2018},
where cache sets are indexed with a keyed function that completely voids any
attacker control over the physical address bits. A key result of these proposals
is that they make cache-based attacks, and specially finding small eviction sets,
more difficult. Their security analysis, however, considers quadratic attackers;
it will be interesting to see how it is affected by our linear-time algorithm.

\section{Conclusion}\label{sec:conclusion}

Finding small eviction sets is a fundamental step in many microarchitectural
attacks. In this paper we perform the first study of finding eviction sets as
an algorithmic problem.

Our core theoretical contribution are novel algorithms that enable computing
eviction sets in linear time, improving over the quadratic state-of-the-art.
Our core practical contribution is a rigorous empirical evaluation in which
we identify and isolate factors that affect their reliability in practice,
such as adaptive replacement strategies and TLB thrashing.

Our results demonstrate that our algorithms enable finding small eviction sets
much faster than before, enabling attacks under scenarios that were previously
deemed impractical. They also exhibit conditions under which the algorithms
fail, providing a basis for research on principled countermeasures.

\vspace{1em}

\section*{Acknowledgments}
We thank Trent Jaeger, Pierre Ganty, and the anonymous reviewers for their
helpful comments. This work was supported by a grant from Intel Corporation,
Ram{\'o}n y Cajal grant RYC-2014-16766, Spanish projects
TIN2015-70713-R DEDETIS and TIN2015-67522-C3-1-R TRACES, and Madrid
regional project S2013/ICE-2731 N-GREENS.

\vspace{1em}

\bibliographystyle{ieeetr}
\bibliography{biblio}

\balance

\appendix
\subsection{Huge Pages}\label{apdx:hugepages}

Modern operating systems implement support for large buffers of virtual memory to be mapped into contiguous physical chunks of $2$MB (or $1$GB), instead that of regular $4$KB. These large chunks are called huge pages. On one hand, huge pages save page walks when traversing arrays of more than $4$KB, improving performance. On the other hand, they increase the risk of memory fragmentation, what might lead to wasting resources.

On Linux systems, huge pages can be demanded explicitly or implicitly:
\begin{itemize}
\item Explicit requests are done by passing special flags to the allocation routine (e.g. flag \verb!MAP_HUGETLB! to the \verb!mmap! function). In order to satisfy these requests, the OS pre-allocates a pool of physical huge pages of configurable size (which by default is $0$ in most systems).
\item Implicit requests are referred as {\em transparent huge pages}~\cite{THP}. THPs are implement with a kernel thread that, similarly to a garbage collector, periodically searches for enough contiguous $4$KB virtual pages that can be re-mapped into a free $2$MB chunk of contiguous physical memory (reducing PTs size). THP can be configured as: \verb!always!, meaning that all memory allocations can be re-mapped; \verb!never!, for disabling it; and \verb!madvise!, where the programmer needs to signal preference for being re-mapped via some flags (note that this is not a guarantee).
\end{itemize}

On other systems huge pages are implement differently, but the effect is generally the same. For instance, BSD's documentation refers to them as {\em super pages}, while Windows calls them {\em large pages}.

Interestingly, memory allocations in modern browsers are not backed by huge pages unless the system is configured with THP set to \verb!always!. Hence, relying on them for finding eviction sets is not feasible in most default systems.

\subsection{Proof of Proposition~\ref{prop:numacc}}\label{apdx:proofcost}

In the worst case, we access $\assoc+1$ different $\assoc$-subsets groups of size $\frac{n}{\assoc+1}$ each, and safely discard $\frac{n}{\assoc+1}$ elements that are not part of the minimal eviction set.
We first express recurrence~\eqref{eq:grouptesting} as a summation
$$ T(n) = \assoc n + \assoc n(\frac{\assoc}{\assoc+1}) + \assoc n(\frac{\assoc}{\assoc+1})^2 + ... + \assoc n(\frac{\assoc}{\assoc+1})^k $$
Our termination condition is $n(\frac{\assoc}{\assoc+1})^k < \assoc$,
meaning that we already have an minimal eviction set. By using the
logarithm we can set the exponent of the last iteration as
$k = \log_{\assoc/(\assoc+1)}{\assoc/n}$, which allows us to define
the function as a geometric progression
$$ T(n) = \assoc \sum_{i=1}^{i=k} n(\frac{\assoc}{\assoc+1})^{i-1} = \frac{\assoc n(1 - (\frac{\assoc}{\assoc+1})^{log_{\assoc/(\assoc+1)}{\assoc/n}})}{1 - \frac{\assoc}{\assoc+1}} $$
The logarithm in the exponent cancels out, and we obtain
$$ T(n) = \frac{\assoc n(1 - \frac{\assoc}{n})}{1-\frac{\assoc}{\assoc+1}} = \assoc(n-\assoc)(\assoc+1) = \assoc^2n + \assoc n - \assoc^3 - \assoc^2 $$

For simplicity we ignore the effect of the ceiling operator required in a real implementation, where $n$ is always an integer. It can be shown that this error is bounded by a small factor of our $k$, so we consider it negligible.

\subsection{Pidgeonhole Principle}\label{apdx:pidgeonhole}
\begin{proposition}
Let $\sizeof{S\cap P}\ge \assoc$ and let $T_1,\dots,T_{a+1}$ be a partition
of $S$. Then there is $i\in\{1,\dots,\assoc+1\}$ such that
$\sizeof{(S\setminus T_i)\cap P}\ge \assoc$
\end{proposition}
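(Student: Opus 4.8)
The plan is to reduce the statement to an elementary counting inequality. Write $k=\sizeof{S\cap P}$ and recall the hypothesis $k\ge\assoc$. Since $T_1,\dots,T_{\assoc+1}$ partition $S$, they also partition the positive elements $S\cap P$: every element of $S\cap P$ lies in exactly one block $T_i$. Hence, setting $p_i=\sizeof{T_i\cap P}$, we obtain $\sum_{i=1}^{\assoc+1}p_i=k$. The desired conclusion, $\sizeof{(S\setminus T_i)\cap P}\ge\assoc$ for some $i$, is equivalent to $k-p_i\ge\assoc$, i.e.\ to the existence of a block $T_i$ carrying few positives, $p_i\le k-\assoc$.

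First I would argue by contradiction, assuming that no such block exists, so that $p_i\ge k-\assoc+1$ for every $i\in\{1,\dots,\assoc+1\}$. Summing this lower bound over all $\assoc+1$ blocks and using $\sum_i p_i=k$ then forces $k\ge(\assoc+1)(k-\assoc+1)$.

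The key step is to see that this inequality is impossible whenever $k\ge\assoc$. A short expansion gives $(\assoc+1)(k-\assoc+1)-k=\assoc(k-\assoc)+1$, which is at least $1$ because $k\ge\assoc$ makes $\assoc(k-\assoc)\ge0$. Thus $(\assoc+1)(k-\assoc+1)>k$, contradicting the bound derived above. This contradiction establishes the existence of an index $i$ with $p_i\le k-\assoc$, which is exactly the claim.

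I do not expect any real obstacle here; the only points that require care are (i) verifying that the blocks $T_i$ genuinely partition $S\cap P$ and not merely $S$ — which is immediate, since a positive element of $S$ falls into precisely one block — and (ii) the arithmetic sign check, where the hypothesis $k\ge\assoc$ enters in an essential way, being exactly what guarantees $\assoc(k-\assoc)\ge0$. An alternative, slightly less clean route would be an averaging argument: the least-occupied of the $\assoc+1$ blocks holds at most $\lfloor k/(\assoc+1)\rfloor$ positives, after which one checks $\lfloor k/(\assoc+1)\rfloor\le k-\assoc$. I prefer the contradiction route, as it sidesteps the floor function entirely.
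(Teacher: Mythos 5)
Your proof is correct, and at its core it is the same counting-by-contradiction argument as the paper's: set $p_i=\sizeof{T_i\cap P}$, use $\sum_i p_i=\sizeof{S\cap P}$, and sum a per-block lower bound to reach an arithmetic contradiction. The one genuine difference is structural: the paper splits into two cases, handling $\sizeof{S\cap P}=\assoc$ separately by a pigeonhole observation (one of the $\assoc+1$ blocks must meet $P$ in the empty set) and only using the summation argument when $\sizeof{S\cap P}>\assoc$, where summing $x-p_i<\assoc$ yields $\assoc x<(\assoc+1)\assoc$, i.e.\ $x<\assoc+1$. That summation alone gives no contradiction at $x=\assoc$, which is precisely why the paper needs its first case. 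You avoid the split by exploiting integrality: negating the conclusion as $p_i\ge k-\assoc+1$ rather than $p_i>k-\assoc$ turns the sum into $k\ge(\assoc+1)(k-\assoc+1)$, and your identity $(\assoc+1)(k-\assoc+1)-k=\assoc(k-\assoc)+1\ge 1$ (valid for all $k\ge\assoc$) kills both cases at once. So your argument buys a uniform, case-free proof at the cost of one line of algebra, while the paper's version trades that algebra for an extra, conceptually transparent pigeonhole case; both are sound, and your arithmetic checks out.
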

\begin{proof}
  When $\sizeof{S\cap P}=\assoc$, one of the $\assoc+1$ blocks of the
  partition must have an empty intersection with $P$, say $T_j$. Then
  $\sizeof{(S\setminus T_j)\cap P}=\sizeof{S\cap P}=\assoc$.

  When $\sizeof{S\cap P}> \assoc$, assume for contradiction that
\begin{equation}\label{eq:contradic}
\forall i\in\{1,\dots,\assoc+1\}\colon \sizeof{(S\setminus T_i)\cap
  P}<\assoc\ .
\end{equation}
We introduce the following notation: $p_i=\sizeof{T_i\cap P}$ and
$x=\sizeof{S\cap P}$. With this, \eqref{eq:contradic} can be
reformulated as
\begin{equation*}
\forall i\colon x-p_i <\assoc
\end{equation*}
Summing over all $i$ we obtain
\begin{equation*}
(\assoc+1)x - x= \assoc x <(\assoc+1) a\ ,
\end{equation*}
which contradicts the assumption that $ x > \assoc$.
\end{proof}

\subsection{Intel's TLBs}\label{apdx:tlbinfo}

Modern CPUs have very distinct TLBs implementations. In particular, modern Intel CPUs implement different buffers for data (dTLB) and instructions (iTLB), a second level TLBs (sTLB) with larger capacity, and different TLBs for each PT level.

Table~\ref{tab:tlbinfo} shows a summary of TLB parameters for Haswell and Skylake families:

\begin{table}[H]
\begin{tabular}{l | c | r }
& Haswell & Skylake \\
\hline
iTLB 4K & 128 entries; 4-way & 128 entries; 8-way \\
iTLB 2M/4M & 8 entries; full & 8 entries; full \\
\hline
dTLB 4K & 64 entries; full & 64 entries; 4-way \\
dTLB 2M/4M & 32 entries; 4-way & 32 entries; 4-way \\
dTLB 1G & 4 entries; 4-way & 4 entries; full \\
\hline
sTLB 4K/2M & 1024 entries; 8-way & 1536 entries; 4-way \\
sTLB 1G & - & 16 entries; 4-way \\
\hline
\end{tabular}
\caption{TLB implementation information for Haswell and Skylake microarchitectures. Extracted from the Intel's Architectures Optimization Manual~\cite{ia2018}.}
\label{tab:tlbinfo}
\end{table}

\clearpage

\end{document}